\documentclass[10pt]{article} 

\usepackage[margin=1.1in]{geometry}
\usepackage{blindtext}
\usepackage{latexsym}
\usepackage{amsmath,amssymb,amsfonts,amsthm}
\usepackage{graphicx,subfigure,psfrag}
\usepackage{mathtools}
\usepackage{multicol,multirow}
\usepackage{algorithm}
\usepackage[noend]{algpseudocode}
\usepackage{arydshln}
\usepackage{wrapfig}
\usepackage{lipsum}
\usepackage{fancybox}
\usepackage{xcolor}
\usepackage[most]{tcolorbox}
\usepackage{kotex}
\usepackage{soul}
\usepackage{empheq}
\usepackage[authoryear, round]{natbib}
\usepackage{booktabs}
\usepackage{hyperref}
\usepackage{cleveref}
\usepackage{enumitem}
\usepackage{mathtools}
\usepackage{bm}
\usepackage{titlesec}
\usepackage{fancyhdr}
\usepackage{mdframed}
\usepackage{microtype}
\usepackage{parskip}

\usepackage{tikz}
\usepackage{pgfplots}
\pgfplotsset{compat=1.18}
\usetikzlibrary{arrows.meta,positioning,calc,fit,backgrounds}

\newtheorem{proposition}{Proposition}

\newtheorem{lemma}{Lemma}

\newtheorem{remark}{Remark}

\hypersetup{
  colorlinks=true,
  linkcolor=blue!70!black,
  citecolor=blue!50!black,
  urlcolor=blue!60!black
}

\providecommand{\E}{\mathbb{E}}
\providecommand{\Prob}{\mathbb{P}}
\newcommand{\R}{\mathbb{R}}
\newcommand{\Var}{\mathrm{Var}}
\newcommand{\cN}{\mathcal{N}}
\newcommand{\cC}{\mathcal{C}}
\newcommand{\cD}{\mathcal{D}}
\newcommand{\cX}{\mathcal{X}}
\newcommand{\dtrain}{\cD_{\mathrm{tr}}}
\newcommand{\dcal}{\cD_{\mathrm{cal}}}
\newcommand{\dtarget}{\cD_{\mathrm{tg}}}
\newcommand{\ps}{p_{\mathrm{s}}}
\newcommand{\pt}{p_{\mathrm{t}}}
\newcommand{\fs}{f_{\mathrm{s}}}
\newcommand{\ft}{f_{\mathrm{t}}}
\newcommand{\bhtheta}{\widehat{\theta}}
\newcommand{\bhmu}{\widehat{\mu}}
\newcommand{\bhsigma}{\widehat{\sigma}}
\newcommand{\sigeps}{\sigma_{\varepsilon}}
\newcommand{\bhsigeps}{\widehat{\sigma}_{\varepsilon}}
\newcommand{\bhSigma}{\widehat{\Sigma}}

\newcommand{\piLt}{\pi_L^{\mathrm{t}}}
\newcommand{\piUt}{\pi_U^{\mathrm{t}}}
\newcommand{\mut}{\mu^{\mathrm{t}}}
\newcommand{\phist}{\phi_{\mathrm{std}}}
\newcommand{\wt}[1]{\widetilde{#1}}
\newcommand{\dTV}{d_{\mathrm{TV}}}
\newcommand{\ind}{\mathbf{1}}

\newcommand{\norm}[1]{\lVert #1 \rVert}
\newcommand{\backbone}{\varphi}
\providecommand{\bbeta}{\widehat{\beta}}
\providecommand{\bq}{\widehat{q}}

\titleformat{\section}{\large\bfseries\color{blue!60!black}}{{\thesection}}{1em}{}[\titlerule]
\titleformat{\subsection}{\normalsize\bfseries\color{blue!40!black}}{{\thesubsection}}{1em}{}
\titleformat{\subsubsection}{\normalsize\itshape\bfseries}{{\thesubsubsection}}{1em}{}

\begin{document}

\begin{center}
  {\LARGE\bfseries \textsf{Conformal Bayes for Two-Sided Censored Gaussian \\ [5pt]
  Regression Under Label Shift }}\\[2em]
  {\large  Seungjin Choi}\\[0.5em]
  {\normalsize CROID Research and aSSIST University, Korea} 
  \end{center}

\vspace{0.5em}
\noindent\rule{\linewidth}{1.5pt}
\vspace{0.5em}

\begin{abstract}
Prediction under label shift becomes nonstandard when responses are censored. 
In a two-sided censored Gaussian model, latent values below $L$ and above $U$ 
are recorded at the boundary values, so the observed predictive distribution is mixed, 
with atoms at $L$ and $U$ and a continuous density on $(L,U)$.  
In this paper we develop conformal Bayes for this mixed-space setting by combining posterior predictive tilting 
with weighted conformal calibration.  Under a two-sided Tobit Gaussian Bayesian prediction head 
with a Laplace posterior approximation, the tilted predictive distribution has left-atom, interior, and right-atom components, 
with a three-term closed-form normalizer.  
The resulting prediction set is a mixed highest density region that can combine boundary atoms 
with an interior interval and can reduce to atom-only sets under strong censoring.  
The main technical issue is that latent label shift does not directly give an ordinary density ratio on the observed censored scale.  
A latent exponential tilt induces tail-averaged atom weights at the censored boundaries, 
while the interior ratio remains density based.  
This yields a mixed observed-space calibration weight with two atom ratios and one interior density ratio.
The weight corrects the calibration measure, while predictive tilting gives target-adapted mixed-HDR geometry.
Synthetic experiments show that weighted tilted conformal Bayes restores marginal coverage 
with smaller sets than weighted source-score calibration, while revealing a trade-off between marginal coverage 
and component-wise behavior across atoms and interior observations.
\end{abstract}


\section{Introduction}
\label{sec:intro}

Many scientific measurements are reported only within a finite detection range
\citep{HeiselDR2011book, WilliamsJR2020bmc}.
A latent continuous response below a lower limit \(L\) is recorded as
\(\wt{Y}=L\), a response above an upper limit \(U\) is recorded as
\(\wt{Y}=U\), and only values in the interval \((L,U)\) are observed exactly.
This type of censoring appears in pharmacokinetic assays, binding measurements,
solubility measurements near saturation, and toxicological endpoints
\citep{KeizerRJ2015prp}.
It is often accompanied by label shift, where the marginal response distribution
changes between training and deployment
\citep{SaerensM2002neco, LiptonZ2018icml, AzizzadenesheliK2019iclr, AlexandariAM2020icml, GargS2020neurips}.
A model may be trained on compounds with moderate activity, but later deployed
on lead candidates enriched for high activity or on safety screens enriched for
extreme outcomes.  The observed distribution then changes through both the
continuous part of the response and the proportions of left and right censored
observations.

Let $Y^*$ denote the latent response and define
\begin{equation}\label{eq:censoring}
\wt{Y}=T(Y^*)=
\begin{cases}
L, & Y^*\le L,\\
Y^*, & L<Y^*<U,\\
U, & Y^*\ge U,
\end{cases}
\end{equation}
where $T: \R \rightarrow \{L\} \cup (L,U) \cup \{U\}$ is the two-sided censoring map.
Even when $Y^*\mid X=x$ is Gaussian, the observed response $\wt{Y}\mid X=x$ is not Gaussian.  
It is a mixed distribution on
\[
\{L\}\cup(L,U)\cup\{U\},
\]
with two point masses at the detection limits and a continuous density on the interior. 
For a fixed input \(x\), the observed conditional distribution is a mixed measure,
\begin{equation}\label{eq:mixed_measure_intro}
dP_{\wt Y\mid X=x}(y)
=
\pi_L(x)\delta_L(dy)
+
f(y\mid x)\mathbf 1\{L<y<U\}\,dy
+
\pi_U(x)\delta_U(dy),
\end{equation}
where \(\mathbf 1\{\cdot\}\) denotes the indicator function, 
\(\delta_L\) and \(\delta_U\) are point masses at the two censoring boundaries,
\[
\pi_L(x)=\Prob(\wt Y=L\mid X=x),
\qquad
\pi_U(x)=\Prob(\wt Y=U\mid X=x),
\]
are the conditional left- and right-censoring probabilities, and
\(f(y\mid x)\) is the conditional density of the exactly observed response on
the interior interval \((L,U)\).
In the latent Gaussian model, these atom probabilities are induced by the tails
of \(Y^*\mid X=x\), namely
\[
\pi_L(x)=\Prob(Y^*\le L\mid X=x),
\qquad
\pi_U(x)=\Prob(Y^*\ge U\mid X=x),
\]
as shown in Fig. \ref{fig:censored-gaussian-dirac-tikz}.

\begin{figure}[ht!]
\centering
\begin{tikzpicture}
\begin{axis}[
    width=0.9\linewidth,
    height=6.0cm,
    xmin=-3.4, xmax=3.7,
    ymin=0, ymax=0.62,
    axis lines=left,
    xlabel={response value},
    ylabel={density / probability mass},
    ytick=\empty,
    xtick={-1.15,1.35},
    xticklabels={$L$,$U$},
    clip=false,
    domain=-3.4:3.7,
    samples=250,
]
\addplot[very thick]
    {1/sqrt(2*pi)*exp(-x^2/2)};
\addplot[
    draw=none,
    fill=gray!25,
    domain=-3.4:-1.15,
]
    {1/sqrt(2*pi)*exp(-x^2/2)} \closedcycle;
\addplot[
    draw=none,
    fill=gray!25,
    domain=1.35:3.7,
]
    {1/sqrt(2*pi)*exp(-x^2/2)} \closedcycle;
\addplot[
    line width=1.3pt,
    domain=-1.15:1.35,
]
    {1/sqrt(2*pi)*exp(-x^2/2)};
\addplot[dashed] coordinates {(-1.15,0) (-1.15,0.43)};
\addplot[dashed] coordinates {(1.35,0) (1.35,0.43)};
\draw[-{Latex[length=3mm,width=2mm]}, line width=1.2pt]
    (axis cs:-1.15,0) -- (axis cs:-1.15,0.37);
\draw[-{Latex[length=3mm,width=2mm]}, line width=1.2pt]
    (axis cs:1.35,0) -- (axis cs:1.35,0.33);
\fill (axis cs:-1.15,0.37) circle (1.5pt);
\fill (axis cs:1.35,0.33) circle (1.5pt);
\node[anchor=east] at (axis cs:-1.15,0.31) {$\pi_L(x)\,\delta_L$};
\node[anchor=west] at (axis cs:1.35,0.27) {$\pi_U(x)\,\delta_U$};
\node[font=\small,align=center] at (axis cs:-2.25,0.045)
{\shortstack{left censored\\tail mass}};
\node[font=\small,align=center] at (axis cs:2.55,0.045)
{\shortstack{right censored\\tail mass}};
\node[
    draw,
    rounded corners,
    fill=gray!8,
    inner sep=4pt,
    align=center,
    anchor=north
] at (rel axis cs:0.5,1.08)
{$Y^*\mid x\sim N(\mu(x),\sigma^2(x))$\\
$dP_{\wt Y\mid X=x}(y)=
\pi_L(x)\delta_L(dy)
+ f(y\mid x)\mathbf{1}\{L<y<U\}\,dy
+ \pi_U(x)\delta_U(dy)$};
\end{axis}
\end{tikzpicture}
\caption{Two-sided censored Gaussian observation model.  A latent Gaussian response
$Y^*\mid x$ is censored at lower and upper detection limits $L$ and $U$.
The left and right latent tail probabilities collapse into two Dirac atoms
$\pi_L(x)\delta_L$ and $\pi_U(x)\delta_U$, while the interior part remains
continuous with density $f(y\mid x)$ on $(L,U)$.}
\label{fig:censored-gaussian-dirac-tikz}
\end{figure}

A classical way to model censored Gaussian responses is Tobit regression \citep{TobinJ1958econ,AmemiyaT1984jecon}.  
In its simplest form, Tobit regression assumes a latent Gaussian response whose value is observed exactly
inside the measurement range and recorded at a boundary outside that range. 
The likelihood combines Gaussian density terms for uncensored observations with Gaussian tail probabilities 
for censored observations.  
We use a Bayesian prediction head based on two-sided Tobit regression as the predictive model, 
but the conformal calibration is carried out on the observed mixed space.

A direct use of standard conformal prediction based on residual scores for \(\wt Y\) 
does not exploit the mixed nature of the observed response.  
The two boundary values are treated as ordinary numerical responses, 
although each boundary value represents an entire latent tail.  
Weighted conformal prediction \citep{TibshiraniR2019neurips} can correct for distribution shift 
once the density ratio on the observed space has been specified.  
Under censoring, this observed-space ratio has three components: a left atom ratio, an interior density ratio,
and a right atom ratio.  For the mixed-HDR score used below, prediction sets also change form.  
They are level sets of a mixed density and may contain one or both boundary atoms together with an interior interval.

Conformal Bayes combines Bayesian predictive modeling with conformal calibration \citep{WassermanL2011ss,FongE2021neurips}.  
The Bayesian model provides a posterior predictive distribution and an adaptive score, 
while the conformal step calibrates this score using held-out data to obtain finite-sample coverage.  
In the uncensored Gaussian setting, split conformal Bayes under label shift can be implemented 
by tilting the Bayesian predictive distribution and using weighted conformal calibration \citep{Choi2026eiml}.  
That construction relies on an ordinary continuous response density.  
The censored case requires a new observed-space formulation because the predictive distribution contains 
both atoms and a continuous component.

We develop conformal Bayes for two-sided censored Gaussian regression under label shift.  
The method combines posterior predictive tilting with weighted conformal calibration on the observed mixed space.  
The Bayesian predictive distribution yields a score that adapts to the target label distribution, 
while the conformal step restores finite-sample marginal coverage 
when the required observed-space importance identity is available.  
The resulting procedures are summarized by three score and weight pairs: unweighted tilted calibration (UT), 
weighted source calibration (WS), and weighted tilted calibration (WT), which are described in Section \ref{subsec:ut_ws_wt}.

The main contributions of this paper are summarized as follows.
\begin{enumerate}[label=(\roman*),leftmargin=2.2em,itemsep=2pt]
\item We derive the source and tilted predictive distributions for a two-sided Tobit Gaussian Bayesian prediction head with a Laplace posterior approximation.  The tilted normalizer decomposes into left-atom, interior, and right-atom terms, which yields closed-form expressions for the mixed predictive law.

\item We identify the main censoring-specific issue in label shift.  A latent label tilt does not induce an ordinary density ratio on the observed censored scale.  Instead, the atom weights are tail-averaged ratios, while the interior weight remains density based.  Marginal atom ratios are therefore approximations unless additional structure makes them exact.

\item We characterize mixed highest density prediction sets on the observed censored space.  Under the Gaussian prediction head, the set is the union of optional boundary atoms and a possibly empty interior interval.  It may include \(L\), \(U\), both atoms, or reduce to atom-only sets under strong censoring.  Disconnected interior geometry can arise only with a multimodal interior predictive density and therefore does not occur under the Gaussian prediction head.

\item We compare unweighted tilted, weighted source, and weighted tilted calibration.  The analysis separates the exact candidate-weighted construction from practical calibration-only plug-in rules based on estimated observed-scale importance weights.  Synthetic experiments show that weighted tilted calibration recovers marginal coverage with smaller sets than weighted source calibration, while coverage across atoms and interior observations can remain uneven.
\end{enumerate}

\section{Related Work}
\label{sec:related}

Conformal prediction provides finite-sample marginal coverage under exchangeability and can be implemented through full 
or split calibration schemes \citep{VovkV2005book,PapadopoulosH2002ecml,ShaferG2008jmlr,BarberRF2023aos,AngelopoulosAN2023ftml}.
A central question is how to retain coverage when calibration and test data are not identically distributed.  Under covariate shift, weighted conformal prediction uses likelihood ratios in the conformal quantile \citep{TibshiraniR2019neurips}.  More general nonexchangeable settings have also been studied through weighted and robust variants \citep{BarberRF2023aos}.  Under label shift, weighted conformal prediction uses important weights in the form of 
marginal label density ratios \citep{LeeHS2025reliableML}. For continuous label shift, recent work develops weighted conformal regression and split conformal Bayes with predictive tilting \citep{Choi2026eiml}.  These methods assume that the response is observed on an ordinary continuous scale.  The present paper asks what remains valid, and what must be changed, when the observed response has atoms at censoring limits.

Bayesian conformal methods use Bayesian predictive distributions as conformal ingredients rather than replacing Bayesian modeling by a purely residual-based procedure.  The broad motivation goes back to the frequentist calibration of Bayesian procedures \citep{WassermanL2011ss}.  Conformal Bayesian computation constructs calibrated Bayesian predictive sets by reweighting posterior samples through an add-one importance scheme \citep{FongE2021neurips}.  Split conformal Bayes under label shift takes a different but complementary route: it tilts the posterior predictive distribution to align the score with a target label distribution and then uses weighted conformal calibration \citep{Choi2026eiml,LeeHS2026eiml}.  Our contribution is not a new principle of conformal Bayes.  It is the mixed-space extension needed when the Bayesian predictive distribution has both Dirac atoms and an interior density.

Censored Gaussian regression has a long statistical history.  The Tobit model was introduced for limited dependent variables \citep{TobinJ1958econ}, and classical surveys describe a family of censored and truncated regression models \citep{AmemiyaT1984jecon}.  
In a two-sided Tobit model \citep{RosettRN1975econ}, uncensored observations contribute Gaussian density terms, while left and right censored observations contribute Gaussian tail probabilities.  Bayesian implementations can use analytic approximations, such as Laplace approximation \citep{TierneyL1986jasa}, or latent variable augmentation schemes related to probit and censored normal models \citep{AlbertJH1993jasa}.  We use a two-sided Tobit Bayesian prediction head with a Laplace approximation because it gives a tractable posterior predictive distribution with explicit atom probabilities and an interior Gaussian density.  The conformal problem then begins after this predictive distribution has been formed.

There is also related work on conformal prediction for censored outcomes, especially survival analysis.  
Conformalized survival analysis and its extensions use censoring weights, imputation, doubly robust correction, 
or sensitivity analysis to handle event times that may be unobserved because of right censoring 
or dependent censoring \citep{CandesE2023jrsssb,JinY2023pnas,GuiY2024biometrika,DavidovH2025iclr,SesiaM2025icml}.  
Recent work by \citet{DavidovH2025iclr} treats general right censored data, while \citet{SesiaM2025icml} develops a doubly robust 
conformalized survival method for right censored data.  
This literature is concerned with missing or partially observed event times and the censoring mechanism that hides them.  
Our setting is different.  The response is a measured quantity subject to lower and upper detection limits, 
so the observed outcome itself is a mixed object with two boundary atoms and one continuous interior component.  
The goal is not to infer an unobserved event time, but to construct predictive sets for the censored observed response under label shift.

The closest comparison is with uncensored split conformal Bayes under continuous label shift \citep{Choi2026eiml}.  In that setting, Gaussian predictive tilting shifts the posterior predictive mean and the resulting highest density set is an interval.  In the censored setting, tilting produces a three-part normalizer, the density ratio separates into atom ratios and an interior ratio, and the highest density region may include boundary atoms or reduce to a boundary atom alone.  The distinction between latent label shift and observed mixed-space label shift is also new.  A latent exponential tilt induces tail-averaged atom weights, which need not equal simple marginal atom ratios.  This is the main technical reason why the censored extension is not just a notational variant of the uncensored Gaussian case.  \Cref{tab:comparison} in \Cref{app:comparison} summarizes these differences point by point.

\section{Model: Two-Sided Censored Gaussian with a Bayesian Prediction Head}\label{sec:model}

We now specify the censored Gaussian predictive model and the observed-scale density ratios 
used later for weighted conformal calibration.

\subsection{Latent Gaussian model and censoring}

Let $Y^*$ be a latent continuous response and let $\wt Y$ be the censored observation
\begin{equation}\label{eq:censoring-model-section}
\wt Y=T(Y^*)=
\begin{cases}
L, & Y^*\le L,\\
Y^*, & L<Y^*<U,\\
U, & Y^*\ge U.
\end{cases}
\end{equation}
The Bayesian prediction head places a Gaussian model on top of a fixed representation $\backbone:\cX\to\R^d$:
\begin{equation}\label{eq:latent}
Y_i^*\mid x_i,\theta
\sim
\cN\!\left(\theta^\top\backbone(x_i),\sigeps^2\right),
\qquad
\theta\sim\cN(0,\tau^2 I_d).
\end{equation}
Only the final linear head is inferred.  The representation may come from a pretrained encoder 
or any fixed feature extractor.  
We write \(\sigeps^2\) for the latent Gaussian noise variance and \(\bhsigeps^2\) for its fitted value.

Because observations are censored at both \(L\) and \(U\), the likelihood is a
two-sided Tobit likelihood, also known as a two-limit Tobit likelihood
\citep{TobinJ1958econ, RosettRN1975econ}.
Let \(\Phi\) and \(\phist\) denote the cumulative distribution function (CDF) and
density of the standard normal distribution, respectively.  For a single
observation,
\begin{equation}\label{eq:tobit}
p(\wt y_i\mid x_i,\theta)=
\begin{cases}
\Phi\!\left(\dfrac{L-\theta^\top\backbone(x_i)}{\sigeps}\right),
& \wt y_i=L,\\[1.2ex]
\dfrac{1}{\sigeps}
\phist\!\left(\dfrac{\wt y_i-\theta^\top\backbone(x_i)}{\sigeps}\right),
& L<\wt y_i<U,\\[1.2ex]
1-\Phi\!\left(\dfrac{U-\theta^\top\backbone(x_i)}{\sigeps}\right),
& \wt y_i=U.
\end{cases}
\end{equation}
An interior observation contributes a Gaussian density, while a censored observation contributes a Gaussian tail probability.  
Then we write \(\ell(\theta;\sigeps)\) for the observed-data log-likelihood of the two-sided Tobit model, 
viewed as a function of \(\theta\) conditional on the noise scale.  
\begin{align}
\ell(\theta;\sigeps)
& =
\sum_{i:\wt Y_i=L}
\log \Phi\!\left(\frac{L-\theta^\top\backbone(x_i)}{\sigeps}\right)
+
\sum_{i:L<\wt Y_i<U}
\left[
-\log\sigeps
+
\log \phist\!\left(\frac{\wt Y_i-\theta^\top\backbone(x_i)}{\sigeps}\right)
\right]  \notag \\
& \quad+
\sum_{i:\wt Y_i=U}
\log \left\{
1-\Phi\!\left(\frac{U-\theta^\top\backbone(x_i)}{\sigeps}\right)
\right\}.
\label{eq:loglik}
\end{align}
With the Gaussian prior $\theta\sim\cN(0,\tau^2 I_d)$, we approximate the posterior of $\theta$ by a Laplace approximation at the MAP,
\begin{equation}\label{eq:lappost}
\pi(\theta\mid\dtrain)
\approx
\cN(\bhtheta,\bhSigma_\theta),
\end{equation}
where
\begin{equation}\label{eq:hessian}
\bhtheta
=
\arg\max_\theta
\left\{
\ell(\theta;\bhsigeps)-\frac{1}{2\tau^2}\norm{\theta}^2
\right\},
\qquad
\bhSigma_\theta
=
\left(
\frac{1}{\tau^2}I_d-\nabla_\theta^2\ell(\bhtheta;\bhsigeps)
\right)^{-1}.
\end{equation}
Details on the gradient, Mills ratio derivative, and Hessian are given in \Cref{app:hessian}.

\subsection{Mixed predictive distribution}

The Laplace approximation gives a Gaussian predictive distribution for the latent response.  For a new input $x$,
\begin{equation}\label{eq:predmeanvar}
Y^*\mid x,\dtrain
\sim
\cN\!\left(\bhmu(x),\bhsigma^2(x)\right),
\qquad
\bhmu(x)=\bhtheta^\top\backbone(x),
\quad
\bhsigma^2(x)=\bhsigeps^2+\backbone(x)^\top\bhSigma_\theta\backbone(x).
\end{equation}
After censoring, the predictive distribution of $\wt Y$ given $X=x$ is mixed:
\begin{equation}\label{eq:mixedpred}
dP_s(\wt y\mid x,\dtrain)
=
\pi_L(x)\delta_L(d\wt y)
+
\fs(\wt y\mid x)\mathbf 1\{L<\wt y<U\}\,d\wt y
+
\pi_U(x)\delta_U(d\wt y).
\end{equation}
The conditional atom masses are
\begin{equation}\label{eq:atommasses}
\pi_L(x)=
\Phi\!\left(\frac{L-\bhmu(x)}{\bhsigma(x)}\right),
\qquad
\pi_U(x)=
1-\Phi\!\left(\frac{U-\bhmu(x)}{\bhsigma(x)}\right),
\end{equation}
and the interior density is
\begin{equation}\label{eq:interior-density-model}
\fs(\wt y\mid x)=
\frac{1}{\bhsigma(x)}
\phist\!\left(\frac{\wt y-\bhmu(x)}{\bhsigma(x)}\right),
\qquad L<\wt y<U.
\end{equation}
Here $\fs$ is the ordinary Gaussian density evaluated on the interior interval.  It is not renormalized over $(L,U)$, since
\[
\pi_L(x)+\int_L^U \fs(y\mid x)\,dy+\pi_U(x)=1.
\]

It is convenient to view \eqref{eq:mixedpred} as a density with respect to a single mixed dominating measure
\begin{equation}\label{eq:mixed-measure-def}
\mu=\delta_L+\mathrm{Leb}_{(L,U)}+\delta_U,
\end{equation}
where $\delta_L,\delta_U$ are unit point masses at the censoring limits and $\mathrm{Leb}_{(L,U)}$ is 
Lebesgue measure on the open interior.  With respect to $\mu$, the conditional mixed predictive density is
\begin{equation}\label{eq:mixed-density-mu}
\ps(\wt y\mid x,\dtrain)
=
\pi_L(x)\,\mathbf 1\{\wt y=L\}
+
\fs(\wt y\mid x)\,\mathbf 1\{L<\wt y<U\}
+
\pi_U(x)\,\mathbf 1\{\wt y=U\},
\end{equation}
so that at the atoms the density value equals the atom mass and on the interior it equals the Gaussian density.  The tilted predictive distribution introduced below is likewise a density with respect to the same $\mu$.  These mixed densities are exactly what the negative-log-density scores in \Cref{sec:routes} evaluate, and the prediction-set level sets in \Cref{sec:hpd} are taken with respect to $\mu$.

\subsection{Marginal observed label ratio}

The marginal observed label ratio is given by
\begin{equation}\label{eq:observed-label-ratio}
w(\wt y)=
\frac{dP^t_{\wt Y}}{dP^s_{\wt Y}}(\wt y).
\end{equation}
The marginal observed label distribution has the same three component structure as the predictive distribution 
because censoring forces $\wt Y$ to live on $\{L\}\cup(L,U)\cup\{U\}$.  
The components are different, however, because they are averaged over the input distribution.  
For each domain $j\in\{s,t\}$, we write
\begin{equation}\label{eq:mixed-marginal-source-target}
dP^j_{\wt Y}(\wt y)
=
\rho_{j,L}\delta_L(d\wt y)
+
g_j(\wt y)\mathbf 1\{L<\wt y<U\}\,d\wt y
+
\rho_{j,U}\delta_U(d\wt y).
\end{equation}
For example, under the source distribution,
\[
\rho_{s,L}=P_s(\wt Y=L),
\qquad
\rho_{s,U}=P_s(\wt Y=U),
\qquad
g_s(y)=\int \fs(y\mid x)\,dP_s^X(x),
\quad L<y<U.
\]
Thus $\rho_{s,L}$ is not the same object as $\pi_L(x)$.  
The former is a marginal censoring probability, while the latter is a predictive atom mass at a particular input $x$.

Taking the Radon Nikodym derivative of \eqref{eq:mixed-marginal-source-target} gives the three component importance weight
\begin{equation}\label{eq:three-component-ratio-model}
w(\wt y)=
\begin{cases}
\dfrac{\rho_{t,L}}{\rho_{s,L}}, & \wt y=L,\\[1.2ex]
\dfrac{g_t(\wt y)}{g_s(\wt y)}, & L<\wt y<U,\\[1.2ex]
\dfrac{\rho_{t,U}}{\rho_{s,U}}, & \wt y=U.
\end{cases}
\end{equation}
The weight $w(\wt y)$ is not a single ordinary density ratio on $\mathbb R$.  
It is a mixed-space ratio.  At the left boundary it is a ratio of censoring probabilities, 
on the interior it is an ordinary density ratio, and at the right boundary it is again a ratio of censoring probabilities.  
This is the central difference from the uncensored Gaussian setting.

\subsection{Predictive tilting by the observed label ratio}

The same marginal weight $w(\wt y)$ plays two distinct roles.  First, it weights calibration examples in weighted conformal calibration.  This use is justified by the observed-space importance identity
\begin{equation}\label{eq:obsimportance}
\E_s\{w(\wt Y)h(X,\wt Y)\}
=
\E_t\{h(X,\wt Y)\}
\qquad
\text{for all measurable }h.
\end{equation}
Second, under the predictive tilting approximation used in split conformal Bayes \citep{Choi2026eiml}, the same observed label ratio is used to modify the source predictive distribution and form a target-aligned score.  This second use is stronger than calibration weighting alone.  It assumes that the fitted source predictive distribution can be transported to the target environment by tilting its observed label coordinate while keeping the input $x$ fixed.  Under this approximation,
\begin{equation}\label{eq:predictive-tilt-general}
\pt(\wt y\mid x,\dtrain)
=
\frac{\ps(\wt y\mid x,\dtrain)w(\wt y)}{Z(x)},
\end{equation}
where the normalizer is the mixed integral
\begin{equation}\label{eq:Zclosed}
Z(x)
=
\pi_L(x)w(L)
+
\int_L^U \fs(y\mid x)w(y)\,dy
+
\pi_U(x)w(U).
\end{equation}
Consequently,
\begin{equation}\label{eq:predictive-ratio}
\frac{\pt(\wt y\mid x,\dtrain)}{\ps(\wt y\mid x,\dtrain)}
=
\frac{w(\wt y)}{Z(x)}.
\end{equation}
The ratio in \eqref{eq:predictive-ratio} is a predictive ratio at a fixed input $x$.  It is not the calibration weight.  The calibration weight remains the marginal observed label ratio $w(\wt y)$ in \eqref{eq:three-component-ratio-model}.

In practice we use the parametric exponential family inherited from Gaussian split conformal Bayes to model the interior ratio.  For $L<\wt y<U$,
\begin{equation}\label{eq:winterior}
w(\wt y)=\frac{\exp(\beta\wt y)}{Z_w}.
\end{equation}
The two atom weights are scalar marginal ratios,
\begin{equation}\label{eq:scalaratoms}
w(L)=\frac{\rho_{t,L}}{\rho_{s,L}},
\qquad
w(U)=\frac{\rho_{t,U}}{\rho_{s,U}}.
\end{equation}
This gives the practical tilted predictive distribution
\begin{equation}\label{eq:tiltedpred}
dP_t(\wt y\mid x,\dtrain)
=
\pi_L^t(x)\delta_L(d\wt y)
+
\ft(\wt y\mid x)\mathbf 1\{L<\wt y<U\}\,d\wt y
+
\pi_U^t(x)\delta_U(d\wt y),
\end{equation}
with
\begin{equation}\label{eq:tiltedparts}
\pi_L^t(x)=\frac{\pi_L(x)w(L)}{Z(x)},
\qquad
\pi_U^t(x)=\frac{\pi_U(x)w(U)}{Z(x)},
\end{equation}
and
\begin{equation}\label{eq:tilted-density-model}
\ft(\wt y\mid x)=
\frac{\fs(\wt y\mid x)\exp(\beta\wt y)/Z_w}{Z(x)},
\qquad L<\wt y<U.
\end{equation}
The interior integral in \eqref{eq:Zclosed} is available in closed form:
\begin{equation}\label{eq:interior-mgf-model}
\int_L^U \fs(y\mid x)\frac{\exp(\beta y)}{Z_w}\,dy
=
\frac{\exp\!\left(\beta\bhmu(x)+\frac{1}{2}\beta^2\bhsigma^2(x)\right)}{Z_w}
\{\Phi(b(x))-\Phi(a(x))\},
\end{equation}
where
\begin{equation}\label{eq:ab}
a(x)=\frac{L-\bhmu(x)-\beta\bhsigma^2(x)}{\bhsigma(x)},
\qquad
b(x)=\frac{U-\bhmu(x)-\beta\bhsigma^2(x)}{\bhsigma(x)}.
\end{equation}
On the interior, the tilted density is proportional to a Gaussian density with shifted mean
\begin{equation}\label{eq:tilted-mean-model}
\mu_t(x)=\bhmu(x)+\beta\bhsigma^2(x).
\end{equation}
The atom masses are adjusted separately through $w(L)$ and $w(U)$.

Combining \eqref{eq:interior-mgf-model} with the two atom terms in
\eqref{eq:Zclosed} gives the practical closed-form normalizer
\begin{equation}\label{eq:Zclosed-practical}
Z(x)
=
\pi_L(x)w(L)
+
\frac{\exp\!\left(\beta\bhmu(x)+\frac12\beta^2\bhsigma^2(x)\right)}{Z_w}
\{\Phi(b(x))-\Phi(a(x))\}
+
\pi_U(x)w(U).
\end{equation}
The tilted atom masses and interior density are given by \eqref{eq:tiltedparts}
and \eqref{eq:tilted-density-model}.

\subsection{From latent tilt to observed atom weights}

The previous subsection defines the method in observed space.  
The importance weight is the marginal observed label ratio $w(\wt y)=\frac{dP^t_{\wt Y}}{dP^s_{\wt Y}}(\wt y)$, and the tilted predictive distribution is obtained by multiplying the source predictive distribution by this observed ratio and normalizing.  This subsection explains how such a mixed observed weight arises when the shift is modeled first on the latent scale.

Suppose that the latent response satisfies a label shift model,
\begin{equation}\label{eq:labelshift}
P_s(X\mid Y^*)=P_t(X\mid Y^*),
\qquad
P_s(Y^*)\neq P_t(Y^*),
\end{equation}
with an exponential latent ratio
\begin{equation}\label{eq:exptilt}
w^*(y^*)=\frac{dP_t^*}{dP_s^*}(y^*)=\frac{\exp(\beta y^*)}{Z_w}.
\end{equation}
On the interior, censoring is invertible, so the latent ratio pushes forward to the pointwise interior ratio in \eqref{eq:winterior}.  At the atoms, however, censoring collapses an entire latent tail.  The observed atom ratios are therefore tail averages:
\begin{equation}\label{eq:wLU}
w(L)=
\frac{\int_{-\infty}^{L}\exp(\beta y^*)\,dP_s^*(y^*)}{Z_w P_s(Y^*\le L)},
\qquad
w(U)=
\frac{\int_{U}^{\infty}\exp(\beta y^*)\,dP_s^*(y^*)}{Z_w P_s(Y^*\ge U)}.
\end{equation}
For a Gaussian source marginal $Y^*\sim\cN(\mu_p,\sigma_p^2)$, this becomes
\begin{equation}\label{eq:wLU_closed}
w(L)=
\frac{\Phi\!\left((L-\mu_p-\beta\sigma_p^2)/\sigma_p\right)}
     {\Phi\!\left((L-\mu_p)/\sigma_p\right)},
\qquad
w(U)=
\frac{1-\Phi\!\left((U-\mu_p-\beta\sigma_p^2)/\sigma_p\right)}
     {1-\Phi\!\left((U-\mu_p)/\sigma_p\right)}.
\end{equation}
A negative tilt increases the left atom ratio, while a positive tilt increases the right atom ratio.

There is one additional distinction between the exact latent pushforward and the practical observed-space implementation.  If the latent tilt is applied to the predictive distribution at a fixed input $x$, the exact atom weights are generally input-dependent:
\begin{equation}\label{eq:wxatoms}
w_x(L)=
\frac{\int_{-\infty}^{L}p_s(y^*\mid x,\dtrain)w^*(y^*)\,dy^*}{\pi_L(x)},
\qquad
w_x(U)=
\frac{\int_{U}^{\infty}p_s(y^*\mid x,\dtrain)w^*(y^*)\,dy^*}{\pi_U(x)}.
\end{equation}
The practical method instead uses the scalar marginal atom ratios in \eqref{eq:scalaratoms}.  Thus the latent model is useful as a source of the observed mixed-space weight, but the algorithm itself is defined by the observed marginal ratio in \eqref{eq:three-component-ratio-model} and the tilted predictive distribution in \eqref{eq:predictive-tilt-general}.  
The full conditional pushforward calculation is shown in \Cref{app:latent-pushforward}.

\begin{remark}[Latent shift becomes a mixed observed-space weight]\label{rmk:latent_observed_shift}
A latent exponential tilt does not push forward to a single ordinary density ratio on the observed censored scale.  On the interior, the censoring map is invertible and the observed weight is the pointwise density ratio in \eqref{eq:winterior}.  At each boundary, censoring collapses an entire latent tail, so the latent tilt induces tail-averaged atom weights.  These conditional tail averages are generally input-dependent, as in \eqref{eq:wxatoms}.  Weighted conformal calibration, however, uses the marginal observed-space ratio in \eqref{eq:three-component-ratio-model}.  Thus the conformal weight is the mixed object with a left atom ratio, an interior density ratio, and a right atom ratio, rather than the latent ratio itself.
\end{remark}

\begin{remark}[Marginal atom-weight approximation]\label{rmk:marginalapprox}
The practical tilted predictive distribution replaces the input-dependent atom weights \(w_x(L),w_x(U)\) in \eqref{eq:wxatoms} by scalar marginal atom ratios in \eqref{eq:scalaratoms}.  These scalar ratios are the correct atom components of the marginal observed-space importance weight.  As substitutes inside the predictive tilt at a fixed input \(x\), they are exact only when the tail-averaged atom weights do not depend on \(x\), or when additional model structure makes the marginal and conditional atom corrections agree.  Otherwise this replacement is an approximation, and it is the main censoring-specific distinction between the exact latent-pushforward construction and the practical observed-space implementation.
\end{remark}

The tilted predictive distribution in \eqref{eq:tiltedpred} defines the conformal score, 
while the marginal observed-label ratio in \eqref{eq:three-component-ratio-model} defines the calibration weight.  
The next section combines these two ingredients to define the UT, WS, and WT calibration configurations.

\section{Calibration and Prediction Sets on the Mixed Space}\label{sec:method}\label{sec:routes}

This section turns the mixed-space ratio from \Cref{sec:model} into a practical split conformal Bayes procedure. The method has four main ingredients: choose a score--weight pair, compute a weighted conformal threshold, convert the WT threshold into a mixed-space HDR set, and estimate the mixed density ratio used by the practical algorithm. The recommended configuration is WT, which uses the observed label ratio for calibration and the tilted mixed predictive distribution for set geometry.  \Cref{fig:method-schematic} gives the overall pipeline, and \Cref{fig:ut-ws-wt} summarizes the three calibration configurations.

\begin{figure*}[t]
\centering
\begin{tikzpicture}[
  x=1cm,y=1cm,
  >=Latex,
  panel/.style={draw, rounded corners, minimum width=3.65cm, minimum height=4.1cm},
  title/.style={font=\bfseries\small, align=center},
  note/.style={font=\scriptsize, align=center},
  tiny/.style={font=\tiny, align=center},
  arr/.style={-{Latex[length=2.2mm]}, thick},
  atom/.style={-{Latex[length=2mm]}, line width=0.9pt},
  base/.style={line width=0.45pt},
  curve/.style={line width=0.9pt},
  th/.style={densely dashed, line width=0.55pt}
]

\begin{scope}[shift={(0,0)}]
\node[panel] (P1) at (0,0) {};
\node[title] at (0,1.75) {Censoring};

\draw[curve]
  plot[smooth] coordinates {(-1.45,0.65) (-1.05,0.78) (-0.55,1.15) (0,1.35) (0.55,1.15) (1.05,0.78) (1.45,0.65)};
\draw[th] (-0.85,0.55) -- (-0.85,1.45);
\draw[th] (0.85,0.55) -- (0.85,1.45);
\node[tiny] at (-0.85,0.42) {$L$};
\node[tiny] at (0.85,0.42) {$U$};
\node[tiny] at (0,1.50) {$Y^*\mid x$};

\draw[arr] (0,0.25) -- (0,-0.35);

\draw[base] (-1.35,-1.15) -- (1.35,-1.15);
\draw[curve]
  plot[smooth] coordinates {(-0.85,-1.15) (-0.45,-0.95) (0,-0.80) (0.45,-0.95) (0.85,-1.15)};
\draw[atom] (-0.85,-1.15) -- (-0.85,-0.55);
\draw[atom] (0.85,-1.15) -- (0.85,-0.70);
\fill (-0.85,-0.55) circle (1.1pt);
\fill (0.85,-0.70) circle (1.1pt);
\node[tiny] at (-0.90,-0.38) {$\pi_L(x)$};
\node[tiny] at (0.92,-0.52) {$\pi_U(x)$};
\node[tiny] at (0,-1.43) {$\widetilde Y=T(Y^*)$};
\end{scope}

\begin{scope}[shift={(4.25,0)}]
\node[panel] (P2) at (0,0) {};
\node[title] at (0,1.75) {Mixed label ratio};

\draw[base] (-1.35,0.72) -- (1.35,0.72);
\draw[curve]
  plot[smooth] coordinates {(-0.85,0.72) (-0.45,0.92) (0,1.07) (0.45,0.95) (0.85,0.72)};
\draw[atom] (-0.85,0.72) -- (-0.85,1.17);
\draw[atom] (0.85,0.72) -- (0.85,1.02);
\node[tiny] at (-1.15,1.32) {source};

\draw[base] (-1.35,-0.38) -- (1.35,-0.38);
\draw[curve]
  plot[smooth] coordinates {(-0.85,-0.38) (-0.45,-0.25) (0,-0.08) (0.45,0.07) (0.85,-0.38)};
\draw[atom] (-0.85,-0.38) -- (-0.85,-0.13);
\draw[atom] (0.85,-0.38) -- (0.85,0.22);
\node[tiny] at (-1.15,0.34) {target};

\node[note, text width=3.05cm] at (0,-1)
{\[
w(\widetilde y)=
\frac{dP^t_{\widetilde Y}}{dP^s_{\widetilde Y}}(\widetilde y)
\]
{\scriptsize two atom ratios + one density ratio}};
\end{scope}

\begin{scope}[shift={(8.50,0)}]
\node[panel] (P3) at (0,0) {};
\node[title] at (0,1.75) {Predictive tilting};

\draw[base] (-1.25,0.95) -- (1.25,0.95);
\draw[curve]
  plot[smooth] coordinates {(-0.85,0.95) (-0.45,1.18) (0.00,1.35) (0.45,1.15) (0.85,0.95)};
\draw[atom] (-0.85,0.95) -- (-0.85,1.35);
\draw[atom] (0.85,0.95) -- (0.85,1.22);
\fill (-0.85,1.35) circle (1.1pt);
\fill (0.85,1.22) circle (1.1pt);
\node[tiny] at (0,1.52) {$p_s(\widetilde y\mid x)$};

\draw[arr] (0,0.72) -- (0,0.25);

\draw[base] (-1.25,-0.10) -- (1.25,-0.10);
\draw[curve]
  plot[smooth] coordinates {(-0.85,-0.10) (-0.45,0.00) (0.00,0.18) (0.45,0.43) (0.85,-0.10)};
\draw[atom] (-0.85,-0.10) -- (-0.85,0.10);
\draw[atom] (0.85,-0.10) -- (0.85,0.62);
\fill (-0.85,0.10) circle (1.1pt);
\fill (0.85,0.62) circle (1.1pt);
\node[tiny] at (0,-0.34) {$p_t^\beta(\widetilde y\mid x)$};

\node[note, text width=3.15cm] at (0,-1.35)
{\[
p_t^\beta(\widetilde y\mid x)
=
\frac{p_s(\widetilde y\mid x)w(\widetilde y)}
     {Z(x)}
\]};
\end{scope}

\begin{scope}[shift={(12.75,0)}]
\node[panel] (P4) at (0,0) {};
\node[title] at (0,1.75) {Mixed HDR set};

\draw[base] (-1.35,-0.45) -- (1.35,-0.45);
\draw[curve]
  plot[smooth] coordinates {(-0.85,-0.45) (-0.45,-0.30) (0,0.05) (0.45,0.35) (0.85,-0.45)};
\draw[atom] (-0.85,-0.45) -- (-0.85,-0.15);
\draw[atom] (0.85,-0.45) -- (0.85,0.65);
\fill (-0.85,-0.15) circle (1.1pt);
\fill (0.85,0.65) circle (1.1pt);

\draw[th] (-1.00,-0.05) -- (1.00,-0.05);
\node[tiny, anchor=west] at (1.02,-0.05) {$e^{-\widehat q}$};

\draw[line width=1.5pt] (-0.10,-0.45) -- (0.58,-0.45);
\fill (0.85,-0.45) circle (1.9pt);
\node[tiny] at (0.22,-0.75) {$[a,b]$};
\node[tiny] at (0.85,-0.75) {$\{U\}$};

\node[note, text width=3.05cm] at (0,-1.48)
{\[
\hspace*{-.1in}
\widehat C(x)=
\{\widetilde y:
p_t^\beta(\widetilde y\mid x)\ge e^{-\widehat q}\}
\]};
\end{scope}

\draw[arr] (1.90,0) -- (2.30,0);
\draw[arr] (6.15,0) -- (6.55,0);
\draw[arr] (10.40,0) -- (10.80,0);

\end{tikzpicture}

\caption{
Schematic of split conformal Bayes on the censored mixed space.
Censoring maps a latent Gaussian response to an observed response with two boundary atoms and one continuous interior density.
The observed label ratio
\(w(\widetilde y)=dP^t_{\widetilde Y}/dP^s_{\widetilde Y}(\widetilde y)\)
is therefore a mixed-space ratio, consisting of two atom-mass ratios and one interior density ratio.
This ratio tilts the source predictive distribution, and weighted conformal calibration selects a density threshold.
The resulting prediction set is a mixed highest-density region that may contain boundary atoms as well as an interior interval.
All predictive densities $p_s,p_t^\beta$ are conditioned on the input $x$ and the training data $\dtrain$; this conditioning is suppressed in the panels for readability.
}
\label{fig:method-schematic}
\end{figure*}

\subsection{Calibration configurations: UT, WS, and WT}
\label{subsec:ut_ws_wt}

The calibration set is drawn from $\ps$, the test point from $\pt$. Label shift requires two distinct corrections: the importance weight $w(\wt{y})$ corrects the calibration \emph{measure}, while the tilted Bayesian score corrects the \emph{geometry} of the candidate set. Censoring does not alter this logic, but it does make the score a mixed-space object. We use the method labels themselves as superscripts and define each configuration by a score--weight pair $(s^r,w^r)$, $r\in\{\mathrm{UT},\mathrm{WS},\mathrm{WT}\}$:
\begin{equation}\label{eq:scores}
\begin{aligned}
s_i^{\mathrm{UT}}&=-\log\pt(\wt{y}_i\mid x_i,\dtrain),
& w_i^{\mathrm{UT}}&=1,\\
s_i^{\mathrm{WS}}&=-\log\ps(\wt{y}_i\mid x_i,\dtrain),
& w_i^{\mathrm{WS}}&=w(\wt{y}_i),\\
s_i^{\mathrm{WT}}&=-\log\pt(\wt{y}_i\mid x_i,\dtrain)
=s_i^{\mathrm{WS}}-\log w(\wt{y}_i)+\log Z(x_i),
& w_i^{\mathrm{WT}}&=w(\wt{y}_i).
\end{aligned}
\end{equation}
Each quantity is evaluated on the appropriate mixed component: $w(L),w(U)$ at atoms, $e^{\beta\wt{y}}/Z_w$ on the interior, and closed-form $Z(x_i)$ from \eqref{eq:Zclosed-practical}. Because $s^{\mathrm{WT}}$ is a negative log-density under the \emph{target} mixed predictive, its fixed-threshold level sets are target mixed-space HDRs. The practical calibration-only thresholds are
\begin{equation}\label{eq:routes}
\bq_{\mathrm{UT}}=\mathrm{Quantile}_{1-\alpha}\big(\{s_i^{\mathrm{UT}}\}\big),\qquad
\bq_r=\inf\Big\{q:\tfrac{\sum_i w_i^r\ind\{s_i^r\le q\}}{\sum_j w_j^r}\ge 1-\alpha\Big\},\ \ r\in\{\mathrm{WS},\mathrm{WT}\}.
\end{equation}
UT uses tilted geometry but no measure correction; WS uses measure correction but source geometry; WT uses both and is the recommended procedure.  This notation avoids the artificial mapping from earlier route labels to the three method names.

\medskip\noindent\textbf{Oracle exact rule.}
The exact finite-sample construction adds the candidate's own test weight. For $r\in\{\mathrm{WS},\mathrm{WT}\}$ and candidate $\wt{y}$, let $w^r(\wt y)=w(\wt y)$ and evaluate the candidate score $s^r(\wt y,x_{\mathrm{test}})$ with the corresponding source or tilted predictive. Then
\begin{equation}\label{eq:exact}
\bq_r(\wt{y})=\inf\Big\{q:\tfrac{\sum_{i}w_i^r\ind\{s_i^r\le q\}+w^r(\wt{y})}{\sum_i w_i^r+w^r(\wt{y})}\ge1-\alpha\Big\},\qquad
\wt{y}\in\cC^r_{\mathrm{exact}}\iff s^r(\wt{y},x_{\mathrm{test}})\le\bq_r(\wt{y}).
\end{equation}
This candidate-dependent rule has exact finite-sample validity under known weights. The fixed-threshold rule \eqref{eq:routes} approximates it up to $\sup_{\wt{y}}w(\wt{y})/\sum_i w(\wt{y}_i)=O(1/n)$ under bounded weights. Conceptually: the exact WT set is not literally a fixed-level HDR at finite $n$, whereas the practical and population WT rules are.

\begin{table}[t]
\centering
\caption{Calibration configurations in the two-sided censored setting. ``Oracle valid'' refers to the candidate-weighted rule \eqref{eq:exact}, not the calibration-only approximation.}\label{tab:routes}
\smallskip
\begin{tabular}{@{}lcccl@{}}
\toprule
Method & Score & Calibration weight & Oracle valid & Geometry \\
\midrule
UT & $s^{\mathrm{UT}}=-\log\pt$ & $1$ & No & target HDR, miscalibrated \\
WS & $s^{\mathrm{WS}}=-\log\ps$ & $w(\wt y)$ & Yes & source HDR \\
WT & $s^{\mathrm{WT}}=-\log\pt$ & $w(\wt y)$ & Yes & target HDR \\
\bottomrule
\end{tabular}
\end{table}

\begin{figure*}[t]
\centering
\begin{tikzpicture}[
  >=Latex,
  node distance=0.7cm and 0.9cm,
  titlebox/.style={
    draw,
    rounded corners,
    align=center,
    inner sep=5pt,
    minimum height=0.9cm,
    text width=3.2cm,
    font=\small
  },
  method/.style={
    draw,
    rounded corners,
    align=center,
    inner sep=6pt,
    minimum height=4.25cm,
    text width=4.0cm,
    font=\small
  },
  outputbox/.style={
    draw,
    rounded corners,
    align=center,
    inner sep=5pt,
    minimum height=1.55cm,
    text width=4.0cm,
    font=\scriptsize
  },
  auxarrow/.style={-{Latex[length=2mm]}, semithick}
]

\node[titlebox] (src) {
\textbf{Source predictive}\\[-1mm]
\[
p_s(\widetilde y\mid x)
\]
};

\node[titlebox, right=2.2cm of src] (tilt) {
\textbf{Tilted predictive}\\[-1mm]
\[
p_t^\beta(\widetilde y\mid x)
\]
};

\node[titlebox, right=2.2cm of tilt] (wgt) {
\textbf{Importance weight}\\[-1mm]
\[
w(\widetilde y)
=
\frac{dP^t_{\widetilde Y}}{dP^s_{\widetilde Y}}(\widetilde y)
\]
};

\node[method, below=1.2cm of src] (UT) {
{\bfseries UT: unweighted tilted}\\[1mm]

\textbf{Score}
\[
s^{\mathrm{UT}}(x,\widetilde y)
=
-\log p_t^\beta(\widetilde y\mid x)
\]

\textbf{Calibration weight}
\[
w^{\mathrm{UT}}(\widetilde y)=1
\]

\textbf{Threshold}
\[
\widehat q_{\mathrm{UT}}=Q_{1-\alpha}
\]
{\scriptsize unweighted conformal quantile}

\vspace{2mm}
{\scriptsize tilted score + unweighted calibration}
};

\node[method, below=1.2cm of tilt] (WS) {
{\bfseries WS: weighted source}\\[1mm]

\textbf{Score}
\[
s^{\mathrm{WS}}(x,\widetilde y)
=
-\log p_s(\widetilde y\mid x)
\]

\textbf{Calibration weight}
\[
w^{\mathrm{WS}}(\widetilde y)=w(\widetilde y)
\]

\textbf{Threshold}
\[
\widehat q_{\mathrm{WS}}=Q^{w}_{1-\alpha}
\]
{\scriptsize weighted conformal quantile}

\vspace{2mm}
{\scriptsize source score + importance weighting}
};

\node[method, below=1.2cm of wgt] (WT) {
{\bfseries WT: weighted tilted}\\[1mm]

\textbf{Score}
\[
s^{\mathrm{WT}}(x,\widetilde y)
=
-\log p_t^\beta(\widetilde y\mid x)
\]

\textbf{Calibration weight}
\[
w^{\mathrm{WT}}(\widetilde y)=w(\widetilde y)
\]

\textbf{Threshold}
\[
\widehat q_{\mathrm{WT}}=Q^{w}_{1-\alpha}
\]
{\scriptsize weighted conformal quantile}

\vspace{2mm}
{\scriptsize tilted score + importance weighting}
};

\draw[auxarrow] (tilt.south) -- (UT.north);
\draw[auxarrow] (src.south) -- (WS.north);
\draw[auxarrow] (wgt.south) -- (WS.north);
\draw[auxarrow] (tilt.south) -- (WT.north);
\draw[auxarrow] (wgt.south) -- (WT.north);

\node[outputbox, below=0.75cm of UT] (CUT) {
\textbf{Conformal prediction set}\\[0.5mm]
\(
\widehat C^{\mathrm{UT}}(x)
=
\left\{\widetilde y : p_t^\beta(\widetilde y \,|\, x)\ge e^{-\widehat q_{\mathrm{UT}}} \right\}
\)\\[1mm]
{\scriptsize tilted mixed-HDR geometry}
};

\node[outputbox, below=0.75cm of WS] (CWS) {
\textbf{Conformal prediction set}\\[0.5mm]
\(
\widehat C^{\mathrm{WS}}(x)
=
\left\{\widetilde y : p_s(\widetilde y \,|\, x)\ge e^{-\widehat q_{\mathrm{WS}}} \right\}
\)\\[1mm]
{\scriptsize source mixed-HDR geometry}
};

\node[outputbox, below=0.75cm of WT] (CWT) {
\textbf{Conformal prediction set}\\[0.5mm]
\(
\widehat C^{\mathrm{WT}}(x)
=
\left\{\widetilde y : p_t^\beta(\widetilde y \,|\,  x)\ge e^{-\widehat q_{\mathrm{WT}}} \right\}
\)\\[1mm]
{\scriptsize tilted mixed-HDR geometry}
};

\draw[auxarrow] (UT.south) -- (CUT.north);
\draw[auxarrow] (WS.south) -- (CWS.north);
\draw[auxarrow] (WT.south) -- (CWT.north);

\end{tikzpicture}

\caption{
Visual summary of the three calibration configurations in split conformal Bayes.
UT uses the tilted predictive score without importance weighting.
WS uses the source predictive score with the mixed-space importance weight
\(w(\widetilde y)=dP^t_{\widetilde Y}/dP^s_{\widetilde Y}(\widetilde y)\).
WT uses both the tilted predictive score and the importance weight.
All three outputs are mixed-HDR-type prediction sets for their corresponding predictive densities: UT and WT use tilted geometry, while WS uses source geometry.  Among the three, WT combines target-adapted geometry with weighted calibration.
All predictive densities are conditioned on the input $x$ and the training data $\dtrain$; this conditioning is suppressed in the panels for readability.
}
\label{fig:ut-ws-wt}
\end{figure*}

\subsection{Mixed-space HDR prediction set}\label{sec:hpd}

The WT prediction set is a density level set of the tilted mixed predictive distribution.  Let
\[
\lambda^*=e^{-\bq_{\mathrm{WT}}}.
\]
Then
\begin{equation}\label{eq:WT-HDR}
\cC^{\mathrm{WT}}(x)
=
\{\wt{y}:\pt(\wt{y}\mid x,\dtrain)\ge \lambda^*\}.
\end{equation}
Because \(\pt\) has atoms at \(L,U\) and a Lebesgue density on \((L,U)\), this level set is taken with respect to the mixed measure \(\mu\) of \eqref{eq:mixed-measure-def}; equivalently, for a set \(A\),
\begin{equation}\label{eq:mixedmeasure}
\mu(A)=\ind[L\in A]+\mathrm{Leb}(A\cap(L,U))+\ind[U\in A].
\end{equation}
With respect to \(\mu\), the atom masses and the interior density are all compared to the same threshold.  Thus the boundary decisions are simply
\begin{equation}\label{eq:atominclusion}
L\in\cC^{\mathrm{WT}}(x)
\iff
\piLt(x)\ge\lambda^*,
\qquad
U\in\cC^{\mathrm{WT}}(x)
\iff
\piUt(x)\ge\lambda^*.
\end{equation}

It remains to describe the interior part.  Define
\[
I(x)=\{\wt{y}\in(L,U):\ft(\wt{y}\mid x)\ge\lambda^*\}.
\]
The tilted interior density is Gaussian-shaped and unimodal on \((L,U)\), with shifted center
\(\mut(x)=\bhmu(x)+\beta\bhsigma^2(x)\).  Therefore \(I(x)\) is either empty or a single interval.  Solving the Gaussian density inequality gives the half-width and clipped endpoints
\begin{equation}\label{eq:interval}
\begin{aligned}
r(x)&=
\bhsigma(x)
\sqrt{
2\Big[
\beta\bhmu(x)
+\tfrac12\beta^2\bhsigma^2(x)
-\log\!\big(Z(x)Z_w\bhsigma(x)\sqrt{2\pi}\big)
-\log\lambda^*
\Big]},\\
y_\ell(x)&=\max\{L,\mut(x)-r(x)\},
\qquad
y_u(x)=\min\{U,\mut(x)+r(x)\}.
\end{aligned}
\end{equation}
The interior set is empty when the radicand in \eqref{eq:interval} is negative.  Equivalently, \(\lambda^*\) exceeds the maximum interior density, which is \(\ft(\mut(x)\mid x)\) if \(\mut(x)\in(L,U)\) and \(\max\{\ft(L^+\mid x),\ft(U^-\mid x)\}\) otherwise.

\begin{proposition}[Structure of the mixed HDR set]\label{prop:structure}
The WT set decomposes as
\[
\cC^{\mathrm{WT}}(x)=A_L(x)\cup I(x)\cup A_U(x),
\]
where
\[
A_L(x)=
\begin{cases}
\{L\}, & \piLt(x)\ge\lambda^*,\\
\emptyset, & \piLt(x)<\lambda^*,
\end{cases}
\qquad
A_U(x)=
\begin{cases}
\{U\}, & \piUt(x)\ge\lambda^*,\\
\emptyset, & \piUt(x)<\lambda^*,
\end{cases}
\]
and \(I(x)\) is either empty or the interval \([y_\ell(x),y_u(x)]\) in \eqref{eq:interval}.  Consequently:
\begin{enumerate}[label={\rm(\alph*)},leftmargin=2.4em,itemsep=2pt]
\item if \(I(x)\neq\emptyset\), the set is an interior interval possibly unioned with one or both boundary atoms;
\item if \(I(x)=\emptyset\), the set is atom-only, two-atom-only, or empty.
\end{enumerate}
\end{proposition}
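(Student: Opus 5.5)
The plan is to partition the support \(\{L\}\cup(L,U)\cup\{U\}\) and read off the level set of the tilted mixed density \(\pt(\cdot\mid x,\dtrain)\) with respect to \(\mu\) on each piece. By \eqref{eq:mixed-density-mu} and \eqref{eq:tiltedpred}, the tilted density equals \(\piLt(x)\) at \(L\), \(\ft(\wt y\mid x)\) on \((L,U)\), and \(\piUt(x)\) at \(U\). Hence \(\cC^{\mathrm{WT}}(x)=\{\wt y:\pt(\wt y\mid x,\dtrain)\ge\lambda^*\}\) splits as a disjoint union of its intersections with the three pieces. The atom pieces are exactly \(A_L(x)\) and \(A_U(x)\) by \eqref{eq:atominclusion}, and the interior piece is \(I(x)\) by definition.

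The substantive step is to show that \(I(x)\) is empty or an interval with the endpoints of \eqref{eq:interval}. I will use \eqref{eq:tilted-density-model} together with the completing-the-square identity behind \eqref{eq:interior-mgf-model}:
\[
\fs(y\mid x)e^{\beta y}=\exp\!\big(\beta\bhmu(x)+\tfrac12\beta^2\bhsigma^2(x)\big)\,\frac{1}{\bhsigma(x)}\phist\!\Big(\frac{y-\mut(x)}{\bhsigma(x)}\Big).
\]
Thus \(\ft(y\mid x)=c(x)\,\bhsigma(x)^{-1}\phist((y-\mut(x))/\bhsigma(x))\) for \(L<y<U\), where \(c(x)=\exp(\beta\bhmu(x)+\frac12\beta^2\bhsigma^2(x))/(Z_wZ(x))>0\). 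Taking logarithms, the condition \(\ft(y\mid x)\ge\lambda^*\) is equivalent to
\[
(y-\mut(x))^2\le 2\bhsigma^2(x)\Big[\beta\bhmu(x)+\tfrac12\beta^2\bhsigma^2(x)-\log\!\big(Z(x)Z_w\bhsigma(x)\sqrt{2\pi}\big)-\log\lambda^*\Big].
\]
If the bracket is negative, the inequality has no solution and \(I(x)=\emptyset\). Otherwise the solution set is \([\mut(x)-r(x),\mut(x)+r(x)]\), with \(r(x)\) as in \eqref{eq:interval}. Intersecting this with \((L,U)\) gives the clipped interval with endpoints \(y_\ell(x),y_u(x)\).

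There is a boundary subtlety: this intersection may be empty even when the radicand is nonnegative, namely when the whole window lies outside \((L,U)\). It may also be half-open at a clipped endpoint, since \(L\) and \(U\) themselves belong to the atom pieces, not the interior. I will handle this by stating that \(I(x)\) is the interval \(\{y\in(L,U):|y-\mut(x)|\le r(x)\}\) and by interpreting the closed notation \([y_\ell,y_u]\) modulo the endpoints \(L,U\), which are accounted for by the atoms. In the case \(y_\ell\ge y_u\), I will classify the set as empty. This agrees with the remark after \eqref{eq:interval} that emptiness is equivalent to \(\lambda^*\) exceeding the supremum of \(\ft\) on \((L,U)\): by unimodality, that supremum is attained at \(\mut(x)\) or approached at a boundary.

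Consequences (a) and (b) then follow by enumeration. Each of \(A_L(x),A_U(x)\) is either a singleton or empty, independently. If \(I(x)\neq\emptyset\), we obtain an interval together with zero, one, or two atoms. If \(I(x)=\emptyset\), we obtain \(\{L\}\), \(\{U\}\), \(\{L,U\}\), or \(\emptyset\). The only step I expect to require care is this endpoint and clipping bookkeeping; the rest is direct algebra from the Gaussian form.
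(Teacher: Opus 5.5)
Your proposal is correct and follows essentially the paper's route: the atoms are decided by directly comparing $\piLt(x),\piUt(x)$ with $\lambda^*$, the interior super-level set is reduced via the completed-square Gaussian form to the quadratic inequality that gives $r(x)$ and the clipped endpoints, and (a)--(b) follow by enumeration (the paper first uses unimodality to get convexity of $I(x)$, which your direct solution of the inequality makes unnecessary). Your clipping bookkeeping is more careful than the paper's proof, which equates $I(x)=\emptyset$ with a negative radicand and so misses the case $\mut(x)\notin(L,U)$ where $[\mut(x)-r(x),\mut(x)+r(x)]$ misses $(L,U)$ even though the radicand is nonnegative; the one adjustment is that your rule ``$y_\ell\ge y_u$ means empty'' should exempt the degenerate tie $r(x)=0$, $\mut(x)\in(L,U)$, where $I(x)=\{\mut(x)\}$ is nonempty.
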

\noindent Proof in \Cref{app:proof-structure}.  All components are closed form once \(\bhtheta,\bhSigma_\theta,\bbeta,\bq_{\mathrm{WT}}\) are fixed: the interior endpoints come from \eqref{eq:interval}, and the two atom decisions come from \eqref{eq:atominclusion}.  Under increasing censoring the dominant forms are the boundary-extended interval and, eventually, the atom-only set.  Disconnected forms are possible in the abstract mixed-space representation but are non-generic for the tilted Gaussian prediction head used here (\Cref{rmk:disconnected}).

\begin{remark}[Atom-only and disconnected sets]\label{rmk:disconnected}
Two non-interval outputs deserve comment.  First, under strong censoring the interior super-level set can be empty and \(\cC^{\mathrm{WT}}\) can collapse to an atom-only set, such as \(\{U\}\) under positive shift or \(\{L\}\) under negative shift.  This reports that the response is plausibly beyond a detection limit, and it is the dominant high-censoring regime in \Cref{sec:exp3}.  Second, a disconnected set such as \(\{L\}\cup[y_\ell,y_u]\cup\{U\}\) would require both atom masses to exceed \(\lambda^*\) while the adjacent interior densities fall below it.  For a single tilted Gaussian this configuration is non-generic: the interior is unimodal and the monotone tilt \(e^{\beta\wt{y}}\) raises one atom while suppressing the other.  The same mechanism makes the two-atom-only form \(\{L\}\cup\{U\}\) (empty interior, both atoms retained) non-generic, which is why neither form appears in \Cref{tab:exp3-geometry}.  A genuinely disconnected mixed set would therefore be more naturally associated with a multimodal interior predictive, for example a non-Gaussian or mixture head.
\end{remark}

\subsection{Density-ratio estimation and pseudo-label bias}\label{sec:dre}

The practical method requires estimates of the three components of
\(w(\wt y)\).  Censoring makes these components different in difficulty.
The two atom weights are probabilities of observable events, while the
interior density ratio requires modeling how the target labels would be
distributed inside \((L,U)\).

\medskip\noindent\textbf{Atom weights.}
The boundary weights are
\[
  w(L)=\frac{\Prob_t(\wt Y=L)}{\Prob_s(\wt Y=L)},
  \qquad
  w(U)=\frac{\Prob_t(\wt Y=U)}{\Prob_s(\wt Y=U)}.
\]
When target censoring status or aggregate target assay summaries are
available, these weights can be estimated by censoring-fraction ratios.  If
only target inputs are available, the atom weights must be supplied by a
predictive model, treated as sensitivity parameters, or evaluated in an oracle
experiment.  Rare atoms should be smoothed or clipped because fraction ratios
can have high variance.

\medskip\noindent\textbf{Interior ratio.}
On the interior, the log ratio is modeled as
\[
  \log w(\wt y)=\beta \wt y-\log Z_w,
  \qquad L<\wt y<U.
\]
Estimating \(\beta\) from target inputs requires pseudo-labels.  A point
pseudo-label \(\bhmu(X)\) is attractive but discards the within-input
predictive variance.  This is exactly the uncertainty that the conformal Bayes
score is meant to use.  To see the bias, write the latent response as
\(Y=M+\varepsilon\), with \(\Var_s(M)=a\) and \(\Var_s(\varepsilon)=b\).
An exponential tilt shifts the target mean by \(\beta(a+b)\), with predictable
part \(\beta a\) and residual part \(\beta b\).

\begin{lemma}[Variance compression]\label{lem:varcomp}
With \(v_s^2=\Var_s(Y^*)\) and
\(\bar\sigma_s^2=\E_{\ps(x)}[\bhsigma^2(x)]\), point pseudo-labels
\(\wt Y=\bhmu(X)\) satisfy
\[
  \Var_{\ps(x)}(\wt Y)
  =\Var_{\ps(x)}[\bhmu(X)]
  \approx v_s^2-\bar\sigma_s^2,
\]
removing the within-input variance.
\end{lemma}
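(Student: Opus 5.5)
The plan is to prove the lemma with the law of total variance applied to the latent response under the source input distribution, and then to identify the two pieces using the Bayesian prediction head. First, I will write
\[
v_s^2=\Var_s(Y^*)=\Var_{\ps(x)}\big[\E_s(Y^*\mid X)\big]+\E_{\ps(x)}\big[\Var_s(Y^*\mid X)\big].
\]
Next, I will invoke the working approximation that the fitted predictive in \eqref{eq:predmeanvar} is well specified for the source conditional law. Under this approximation, \(\E_s(Y^*\mid X=x)\approx\bhmu(x)\) and \(\Var_s(Y^*\mid X=x)\approx\bhsigma^2(x)=\bhsigeps^2+\backbone(x)^\top\bhSigma_\theta\backbone(x)\).

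Substituting these into the decomposition gives \(v_s^2\approx\Var_{\ps(x)}[\bhmu(X)]+\bar\sigma_s^2\). Rearranging yields the displayed relation. The first equality in the lemma holds by definition, since the pseudo-label is \(\wt Y=\bhmu(X)\). Its variance under \(\ps(x)\) is therefore exactly \(\Var_{\ps(x)}[\bhmu(X)]\), so no approximation enters there.

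To connect with the decomposition \(Y=M+\varepsilon\) in the preceding paragraph, I will set \(M=\E_s(Y^*\mid X)\) and \(\varepsilon=Y^*-M\). These two pieces are uncorrelated by the tower property, so \(a=\Var_s(M)\approx\Var[\bhmu(X)]\) and \(b=\E[\Var_s(Y^*\mid X)]\approx\bar\sigma_s^2\). In words, the point pseudo-labels retain only the component \(a\) and discard the within-input variance \(b\). This is the sense of "removing the within-input variance."

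The main obstacle is justifying the \(\approx\). The lemma is stated for the latent \(Y^*\), but \(\bhmu\) and \(\bhsigma^2\) come from a Laplace-approximated Tobit posterior, so calibration of the predictive mean and variance is an assumption rather than a consequence. I will state the result under well-specification and note the error terms explicitly. The first is \(\Var[\E_s(Y^*\mid X)-\bhmu(X)]\) plus the corresponding covariance term. The second is \(\E[\Var_s(Y^*\mid X)-\bhsigma^2(X)]\), which also absorbs the parameter-uncertainty term \(\backbone^\top\bhSigma_\theta\backbone\). Both error terms vanish when the head is correct and the posterior concentrates. I will also remark that if the pseudo-labels are clipped to \([L,U]\), the variance can only shrink further, since clipping is a 1-Lipschitz contraction. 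The compression conclusion is therefore preserved.
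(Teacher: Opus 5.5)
Your proposal is correct and matches the paper's proof: apply the law of total variance to $Y^*$ under the source input law, then substitute $\E[Y^*\mid x]\approx\bhmu(x)$ and $\Var(Y^*\mid x)\approx\bhsigma^2(x)$. Your explicit error terms, including the parameter-uncertainty term, and your remark that clipping can only reduce variance go beyond the paper, but they are correct and consistent with its caveat about posterior parameter uncertainty.
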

\noindent
The approximation uses the posterior predictive variance
\(\bhsigma^2(x)=\bhsigeps^2+\backbone(x)^\top\bhSigma_\theta\backbone(x)\) in
place of the latent conditional variance \(\sigeps^2\).  These coincide only
when posterior parameter uncertainty is negligible, as in the oracle regime used
in \Cref{sec:exp}.

\begin{proposition}[Point and source-predictive pseudo-label bias]\label{prop:lrbias}
In the ideal decomposition, tilting gives
\(Y_t\sim\cN(\beta(a+b),a+b)\) and \(M_t\sim\cN(\beta a,a)\).  Point
pseudo-labels \(\wt Y_{\rm point}=M_t\) compress variance from \(a+b\) to
\(a\) and miss the residual shift \(\beta b\).  Source-predictive sampling
\(\wt Y_{\rm srcPS}=M_t+\varepsilon'\), \(\varepsilon'\sim\cN(0,b)\), restores
variance but still misses \(\beta b\).  A log-linear slope fit to such samples
is therefore attenuated to
\[
  \widehat\beta_{\rm srcPS}\approx \beta\,\frac{a}{a+b}
  =\beta R^2_{\rm BRR}.
\]
\end{proposition}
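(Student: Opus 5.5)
The plan is to work entirely inside the ideal Gaussian decomposition. The source latent response is written as \(Y=M+\varepsilon\), with \(M\sim\cN(0,a)\) and \(\varepsilon\sim\cN(0,b)\) independent, so that \(Y\sim\cN(0,a+b)\). The centering at zero is a convention that absorbs \(\mu_p\). The target marginals then follow from the exponential tilt \eqref{eq:exptilt} applied to the label \(Y\). For \(Y\) itself, completing the square in \(\exp(\beta y)\,\phist(y/\sqrt{a+b})\) shows that \(Y_t\sim\cN(\beta(a+b),a+b)\); the variance is unchanged and the mean shifts by \(\beta(a+b)\).

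For the predictable part, I use the label shift assumption \eqref{eq:labelshift}: the joint target law of \((M,Y)\) is the source joint law reweighted by \(e^{\beta y}/Z_w\). Integrating out \(y\) given \(m\) gives the reweighting
\[
\E_s[e^{\beta Y}\mid M=m]=e^{\beta m+\beta^2 b/2},
\]
so the target law of \(M\) is the source \(\cN(0,a)\) tilted by \(e^{\beta m}\), which is \(\cN(\beta a,a)\). This establishes \(M_t\sim\cN(\beta a,a)\). Point pseudo-labels are \(\bhmu(X)=M\) evaluated on target inputs, so their law is \(\cN(\beta a,a)\). Comparing with \(Y_t\) shows the variance compressed from \(a+b\) to \(a\), consistent with \Cref{lem:varcomp}, and the mean missing the residual shift \(\beta(a+b)-\beta a=\beta b\).

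For source-predictive sampling, the pseudo-label adds fresh noise \(\varepsilon'\sim\cN(0,b)\) independent of \(M_t\). Its law is therefore \(\cN(\beta a,a+b)\): the variance is restored to \(a+b\), but the mean is still \(\beta a\), so the shift \(\beta b\) is still missing. The conceptual point is that \(\varepsilon'\) is drawn from the source residual law, while under the target the residual given \(M\) has been tilted to mean \(\beta b\).

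For the slope attenuation, I interpret the log-linear fit as the population moment-matching (maximum likelihood) estimator of \(\beta\) within the exponential family \(\{e^{\beta y}p_s(y)/Z_w\}\) with Gaussian \(p_s=\cN(0,a+b)\). That family is exactly \(\{\cN(\beta(a+b),a+b)\}\), so fitting it to samples with mean \(\beta a\) gives \(\widehat\beta_{\rm srcPS}(a+b)\approx\beta a\), that is, \(\widehat\beta_{\rm srcPS}\approx\beta a/(a+b)\). The estimator converges to this value by consistency of the sample mean. Finally, \(a/(a+b)=\Var(M)/\Var(Y)\) is the population \(R^2\) of the Bayesian ridge predictor in the oracle regime, which justifies writing it as \(R^2_{\rm BRR}\).

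The main obstacle is pinning down what ``log-linear slope fit'' means, since a discriminative density-ratio fit, such as logistic regression of source versus pseudo-target, would also give \(\beta a/(a+b)\) only because both the source and pseudo-target laws are Gaussian with equal variance \(a+b\). I would state the estimator as the moment-matching or maximum-likelihood one, and note that the log density ratio between \(\cN(\beta a,a+b)\) and \(\cN(0,a+b)\) is exactly linear with slope \(\beta a/(a+b)\). That covers both interpretations, and the \(\approx\) then reflects only finite-sample error and the oracle identification \(\bhsigma^2\approx b\).
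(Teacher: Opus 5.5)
Your proposal is correct and follows essentially the same route as the paper: a Gaussian tilt shifts each independent component's mean by $\beta$ times its variance, and the slope is read off by comparing $\cN(\beta a,a+b)$ with the source marginal $\cN(0,a+b)$. Your derivation of the $M$-tilt via $\E_s[e^{\beta Y}\mid M=m]=e^{\beta m+\beta^2 b/2}$ is the paper's factorization $e^{\beta(M+\varepsilon)}=e^{\beta M}e^{\beta\varepsilon}$ with $M\perp\varepsilon$. Your added observation that the log ratio of these two equal-variance Gaussians is exactly linear with slope $\beta a/(a+b)$, so that the likelihood/moment-matching and discriminative readings of the ``log-linear fit'' agree, makes explicit a step the paper leaves implicit.
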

\noindent Proofs of \Cref{lem:varcomp} and \Cref{prop:lrbias} are given in \Cref{app:proof-lrbias}.

\medskip\noindent\textbf{Tilted predictive sampling.}
A direct way to restore the missing residual shift is to sample from the
tilted predictive distribution.  Given an initial value \(\bbeta^{(0)}\), draw
\begin{equation}\label{eq:tiltpredsamp}
\wt y_j^{(s)}
=
\bhmu(x_j)+\bbeta^{(0)}\bhsigma^2(x_j)
+\bhsigma(x_j)\epsilon_j^{(s)},
\qquad
\epsilon_j^{(s)}\sim\cN(0,1),
\quad s=1,\ldots,S.
\end{equation}
Draws outside \((L,U)\) are assigned to the corresponding atom.  Moment
matching,
\[
  \bbeta^{(0)}_{\rm MM}
  =\frac{\bar\mu_t-\bar\mu_s}{\widehat v_\mu^2+\lambda},
\]
with a small ridge \(\lambda\ge0\), is a stable initializer and a strong
baseline in the oracle linear-Gaussian setting.  If \(\bbeta^{(0)}=\beta\), the
samples in \eqref{eq:tiltpredsamp} have conditional mean
\(\bhmu(x)+\beta\bhsigma^2(x)\) and variance \(\bhsigma^2(x)\), matching the
tilted predictive distribution before censoring.  Thus tilted predictive
sampling is the principled default when the predictive variance is trusted,
while moment matching is useful as an initializer and benchmark.

\begin{table}[t]
\centering
\caption{Bias structure and recommended estimators for the mixed censored ratio.}\label{tab:biassummary}
\smallskip
\begin{tabular}{@{}llll@{}}
\toprule
Component & Estimator & Main issue & Recommendation \\
\midrule
\(w(L),w(U)\) & censoring-fraction ratio & rare-atom noise & smooth or clip if rare \\
\(\beta\) & LR-point & variance compression and missing shift & diagnostic only \\
\(\beta\) & source-predictive sampling & restores variance, attenuates tilt & ablation \\
\(\beta\) & tilted predictive sampling & depends on initializer & principled default \\
\(\beta\) & moment matching & predictable shift only & initializer or strong baseline \\
\bottomrule
\end{tabular}
\end{table}

\subsection{Practical algorithm}\label{sec:algorithm}

The practical WT implementation follows the construction above.  It first fits
the censored Bayesian prediction head, then estimates the mixed ratio, calibrates the
WT score with the estimated weights, and finally returns the closed-form mixed
HDR set.

\begin{algorithm}[t]
\caption{Practical WT SCB-C}\label{alg:scbc}
\begin{algorithmic}[1]
\Require Training data \(\dtrain\), calibration data \(\dcal\), target covariates \(\dtarget=\{x_j\}\), test input \(x_{\mathrm{test}}\), bounds \(L<U\), backbone \(\backbone\), level \(1-\alpha\)
\Ensure Prediction set \(\cC^{\mathrm{WT}}(x_{\mathrm{test}})\)
\State \textbf{Fit censored Bayesian prediction head.} Obtain the MAP mean parameter \(\bhtheta\) and fitted noise variance \(\bhsigeps^2\) from the two-sided Tobit likelihood \eqref{eq:tobit}; compute the Laplace covariance \(\bhSigma_\theta\) from \eqref{eq:hessian}; evaluate \(\bhmu(x)\) and the predictive variance \(\bhsigma^2(x)\) using \eqref{eq:predmeanvar}.
\State \textbf{Estimate mixed ratio.} Using target covariates but not target responses, estimate the interior tilt \(\bbeta\) by tilted predictive sampling as in \Cref{sec:dre}.  Form the interior ratio \(\widehat{w}(\wt y)\propto\exp(\bbeta\wt y)\) on \((L,U)\), and compute \(\widehat{w}(L)\) and \(\widehat{w}(U)\) from the induced tilted marginal masses on the observed censored scale.
\State \textbf{Compute WT calibration scores.} For each \((x_i,\wt y_i)\in\dcal\), compute \(\log\widehat{Z}(x_i)\) from \eqref{eq:Zclosed-practical} and
\[
  s_i^{\mathrm{WT}}
  =-\log \ps(\wt y_i\mid x_i)-\log \widehat{w}(\wt y_i)+\log\widehat{Z}(x_i).
\]
\State \textbf{Calibrate threshold.} Compute \(\bq_{\mathrm{WT}}\) by the weighted quantile rule \eqref{eq:routes} using scores \(\{s_i^{\mathrm{WT}}\}\) and weights \(\{\widehat{w}(\wt y_i)\}\).
\State \textbf{Return mixed HDR set.} Set \(\lambda^*=e^{-\bq_{\mathrm{WT}}}\); include atoms using \eqref{eq:atominclusion}; compute the interior interval using \eqref{eq:interval}; return \(\cC^{\mathrm{WT}}(x_{\mathrm{test}})=A_L\cup I\cup A_U\).
\end{algorithmic}
\end{algorithm}
\noindent
After the Bayesian prediction head and ratio estimates are fixed, prediction for a new
input requires only closed-form evaluations of \(\widehat{Z}(x)\), the two atom masses,
and the interval endpoints.  The main computational cost is the one-time
Laplace Hessian for the Bayesian prediction head.

\section{Validity and Efficiency of Mixed-Space Calibration}\label{sec:theory}

This section separates the two main theoretical claims.  Weighted conformal
validity is an oracle statement about the observed mixed-space importance
weight.  The WT score is an efficiency device: it changes the geometry of the
prediction set, but is not required for validity once the weights are correct.

\subsection{Validity hierarchy: UT, WS, and WT}

\begin{proposition}[Correction hierarchy]\label{prop:hierarchy}
Let \((x_i,\wt y_i)_{i=1}^n\sim\ps\) i.i.d.,
\((x_{\mathrm{test}},\wt Y_{\mathrm{test}})\sim\pt\), and assume a known
mixed-space importance weight \(w(\wt y)\) satisfying \eqref{eq:obsimportance}.
\begin{enumerate}[label={\rm(\roman*)},leftmargin=*]
\item \emph{UT is generally miscalibrated:}
\[
\left|
\Prob_{\pt}\{\wt Y_{\mathrm{test}}\in\cC^{\mathrm{UT}}\}-(1-\alpha)
\right|
\le
\dTV\{\mathcal L_{\ps}(s^{\mathrm{WT}}),\mathcal L_{\pt}(s^{\mathrm{WT}})\}
+O(n^{-1/2}),
\]
and it may over- or under-cover.
\item \emph{WS and WT are oracle valid:} the candidate-weighted sets
\eqref{eq:exact} satisfy
\[
\Prob_{\pt}\{\wt Y_{\mathrm{test}}\in\cC^r_{\mathrm{exact}}\}
\ge 1-\alpha,
\qquad r\in\{\mathrm{WS},\mathrm{WT}\}.
\]
This statement does not require the two-sided Tobit likelihood, Laplace
approximation, or tilted score to be correct.
\item \emph{Calibration-only is practical, not exact:} if \(0<w\le W_{\max}\)
and \(n^{-1}\sum_i w(\wt y_i)\) is bounded away from zero, the fixed-threshold
rule differs from \eqref{eq:exact} by \(O_{\Prob}(1/n)\) uniformly over
bounded-weight candidates.
\end{enumerate}
\end{proposition}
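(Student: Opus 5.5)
I will prove the three parts separately. All three rest on the observed-space importance identity \eqref{eq:obsimportance}, with \(h\) taken to be indicator functions of the scores. Part (ii) is the core result, and I treat it first.

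For \emph{(ii)}, I follow the weighted exchangeability argument of \citet{TibshiraniR2019neurips}, adapted to label shift. Fix \(r\) and write \(Z_i=(x_i,\wt y_i)\) for \(i\le n\) and \(Z_{n+1}=(x_{\mathrm{test}},\wt Y_{\mathrm{test}})\).

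Step 1: show that the joint law of \((Z_1,\dots,Z_{n+1})\) has density \(\prod_{i\le n}p_s(z_i)\,p_t(z_{n+1})\) with respect to \((\mu\otimes P^X)^{n+1}\). By \eqref{eq:obsimportance} this equals \(\prod_{i=1}^{n+1}p_s(z_i)\cdot w(\wt y_{n+1})\).

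Step 2: condition on the unordered multiset \(\{Z_1,\dots,Z_{n+1}\}=\{z_1,\dots,z_{n+1}\}\). The symmetric factor \(\prod p_s\) cancels, so the test point equals \(z_k\) with probability \(w(\wt y_k)/\sum_{j}w(\wt y_j)\). The scores \(s^r\) are fixed functions once \(\dtrain\) is fixed, since split conformal treats training data as independent of calibration and test data. Hence, conditionally, \(s^r(Z_{n+1})\) is distributed as the weighted empirical law \(\sum_k p_k\delta_{s^r(z_k)}\).

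Step 3: the event \(s^r(Z_{n+1})\le \bq_r(\wt Y_{n+1})\) is exactly the event that the test score lies at or below the \((1-\alpha)\) quantile of this weighted law. Note that \eqref{eq:exact} places the candidate's weight on the candidate score rather than at \(+\infty\); this only enlarges the set relative to the standard construction. The conditional probability of the event is therefore at least \(1-\alpha\), and taking expectations finishes the argument.

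Nothing in Steps 1--3 uses the Tobit model, the Laplace approximation or the tilt. The score only needs to be a measurable function fixed by \(\dtrain\), and the mixed measure \(\mu\) enters only as the dominating measure. This gives the robustness remark in the statement.

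For \emph{(iii)}, I compare the two thresholds directly. The calibration-only weighted CDF \(F_n(q)=\sum_i w_i\ind\{s_i\le q\}/\sum_i w_i\) and the candidate-augmented CDF \(F_n^{\wt y}(q)\) differ by at most \(w(\wt y)/(\sum_i w_i+w(\wt y))\le W_{\max}/\sum_i w_i\). This is \(O_{\Prob}(1/n)\) uniformly in \(\wt y\), because \(n^{-1}\sum w_i\) is bounded away from zero. Hence \(\bq_r\) is sandwiched between the augmented quantiles at levels \(1-\alpha\pm O(1/n)\). It follows that the sets differ only on candidates whose score falls in a level band of weighted mass \(O(1/n)\), and I would phrase the claim in exactly this sense.

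For \emph{(i)}, \(\bq_{\mathrm{UT}}\) is the empirical \((1-\alpha)\) quantile of \(s^{\mathrm{WT}}\) (which equals \(s^{\mathrm{UT}}\)) under \(\ps\). By the DKW inequality it is within \(O_{\Prob}(n^{-1/2})\) in CDF of the population quantile \(q_s\), so \(F_s(\bq_{\mathrm{UT}})=1-\alpha+O(n^{-1/2})\). The coverage is \(F_t(\bq_{\mathrm{UT}})\), where \(F_s,F_t\) denote the score CDFs under \(\ps,\pt\). Since \(|F_t(q)-F_s(q)|\le\dTV\) for every \(q\), the stated bound follows; ties contribute at most \(1/(n+1)\). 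To show the sign is indeterminate, I take two concrete tilt directions, one with \(F_t>F_s\) at \(q_s\) and one with the reverse, which is easy in the Gaussian case.

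The main obstacle is making (iii) precise. Closeness of quantile levels does not give closeness of the thresholds \(\bq\) themselves without a density or anti-concentration condition on the score law. I will state the result at the level of CDFs and coverage rather than thresholds, which seems to be the intended meaning.
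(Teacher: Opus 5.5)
Your three arguments follow the paper's own route. Part (ii) is the weighted-exchangeability (weighted rank) argument, driven only by \eqref{eq:obsimportance}. Part (iii) is the one-term bound $w(\wt y)/(\sum_i w_i+w(\wt y))$ on the change in the weighted CDF. Part (i) is a TV comparison of score CDFs plus an $O(n^{-1/2})$ quantile error. Your reading of (iii) at the level of CDFs and coverage, rather than thresholds, is more careful than the paper's and is the right one. Two steps need correcting, however.

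\textbf{The sign-indeterminacy example in (i) fails as proposed.} Reversing the tilt in a well-specified Gaussian head gives the same sign both times. In the uncensored case, the residual about the tilted centre $\bhmu(x)+\beta\sigma^2$ is $\cN(-\beta\sigma^2,\sigma^2)$ under the source and $\cN(0,\sigma^2)$ under the target. The source score is therefore a noncentral $\chi^2_1$, stochastically larger for either sign of $\beta$, so UT over-covers in both directions. The paper's \Cref{tab:exp1-hierarchy} shows UT coverage $0.992$ at $\beta=\pm1.2$, exactly as the reflection $(x,\wt y)\mapsto(-x,-\wt y)$ of its symmetric design predicts.

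Under-coverage requires misspecification. For example, let the true source mean exceed the fitted $\bhmu(x)$ by $\delta$, with $\delta\beta>0$ and $|\delta|>|\beta|\sigma^2/2$. Then the target residual $\cN(\delta,\sigma^2)$ is larger in magnitude than the source residual $\cN(\delta-\beta\sigma^2,\sigma^2)$.

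Separately, DKW gives $F_s(\bq_{\mathrm{UT}})=1-\alpha+O_{\Prob}(n^{-1/2})$ only when the source score law has no atom at its $(1-\alpha)$-quantile. Your ``$1/(n+1)$ for ties'' remark covers sample ties but not atoms of $F_s$. This continuity is implicit in the paper's statement as well, and it holds generically here because the atom scores $-\log\piLt(X)$ and $-\log\piUt(X)$ vary continuously with $X$.

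\textbf{The side remark in Step 3 is wrong, and it hides a real issue with \eqref{eq:exact}.} Placing the candidate's weight at the candidate's own score does not enlarge the set relative to the $+\infty$ placement. It can only lower the threshold, yet the event $\{s_{n+1}\le\bq\}$ is identical under both placements. Your Step 3 already proves validity directly, since $\bq_r(\wt Y_{n+1})$ is then the $(1-\alpha)$-quantile of the symmetric law $\sum_k p_k\delta_{s^r(z_k)}$.

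However, \eqref{eq:exact} as printed adds $w^r(\wt y)$ to the numerator with no indicator, which is the $-\infty$ placement. That rule is anti-conservative by up to $w(\wt y)/(\sum_i w_i+w(\wt y))$. With $w\equiv1$, $n=9$ and $\alpha=0.1$, it returns $s_{(8)}$ and covers with probability $0.8$. You should therefore state that (ii) holds for the rule with $w^r(\wt y)\ind\{s^r(\wt y,x_{\mathrm{test}})\le q\}$ in the numerator (equivalently, weight at $+\infty$). The paper's one-line weighted-rank proof tacitly relies on the same correction.
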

\noindent Proof in \Cref{app:proof-hierarchy}.  The single ingredient behind the oracle
statement is the mixed-space importance identity \eqref{eq:obsimportance}.  The
Bayesian prediction head and the tilted score affect the usefulness and size of the set,
not the oracle coverage proof for a fixed score.  We therefore distinguish an
oracle regime with known weights and candidate-weighted calibration from the
plug-in regime used by the practical algorithm.

\subsection{Efficiency of the mixed-HDR geometry}

\begin{proposition}[Mixed-HDR optimality]\label{prop:NP}
Fix a target mixed density \(p_t(\wt y\mid x)\) with respect to \(\mu\). Among
all rules with \(\Prob_{p_t}\{\wt Y\in C(X)\}\ge1-\alpha\), any minimizer of
\(\E[\mu(C(X))]\) is, up to ties, a global density-threshold rule
\[
C^*(x)=\{\wt y:p_t(\wt y\mid x)\ge\lambda^*\}.
\]
Hence the fixed-threshold WT rule is the population mixed-HDR rule; WS forms
source HDRs and is generally suboptimal under shift.
\end{proposition}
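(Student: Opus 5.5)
This is a Neyman--Pearson argument carried out pointwise in $(x,\wt y)$ on the product space $\cX\times\{L,(L,U),U\}$, with reference measure $P^X_t\otimes\mu$. A (possibly randomized) rule is a function $\phi(x,\wt y)\in[0,1]$, where $\phi=\ind\{\wt y\in C(x)\}$ for a deterministic rule. Its coverage and expected size are
\[
\mathrm{Cov}(\phi)=\int\!\!\int \phi(x,\wt y)\,p_t(\wt y\mid x)\,\mu(d\wt y)\,dP^X_t(x),
\qquad
\mathrm{Size}(\phi)=\int\!\!\int \phi(x,\wt y)\,\mu(d\wt y)\,dP^X_t(x).
\]
The problem is to minimize $\mathrm{Size}$ subject to $\mathrm{Cov}\ge 1-\alpha$. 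Because $\mu$ assigns unit mass to each atom and Lebesgue mass to the interior, both functionals are linear in $\phi$. The atoms therefore enter exactly like interior points, each weighted by its $\mu$-mass. This is the reason for defining HDRs with respect to $\mu$ in \eqref{eq:mixedmeasure}.

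\textbf{Step 1: pick the threshold.} The map $\lambda\mapsto\mathrm{Cov}(\ind\{p_t\ge\lambda\})$ is nonincreasing and right-continuous in the appropriate sense. Take $\lambda^*$ to be the largest $\lambda$ with $\mathrm{Cov}\ge1-\alpha$. If the level set $\{p_t=\lambda^*\}$ carries positive $P^X_t\otimes\mu$-mass, randomize on it so that coverage is exactly $1-\alpha$; these are the ``ties'' in the statement. Call the resulting rule $\phi^*$.

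\textbf{Step 2: the exchange inequality.} For any feasible $\phi$, note that $(\phi^*-\phi)(p_t-\lambda^*)\ge 0$ pointwise. Where $p_t>\lambda^*$ we have $\phi^*=1\ge\phi$, and where $p_t<\lambda^*$ we have $\phi^*=0\le\phi$. Integrating against $P^X_t\otimes\mu$ gives
\[
\mathrm{Cov}(\phi^*)-\mathrm{Cov}(\phi)\ge\lambda^*\{\mathrm{Size}(\phi^*)-\mathrm{Size}(\phi)\}.
\]
Since $\mathrm{Cov}(\phi^*)=1-\alpha\le\mathrm{Cov}(\phi)$ and $\lambda^*>0$ (assume $\alpha>0$), it follows that $\mathrm{Size}(\phi^*)\le\mathrm{Size}(\phi)$.

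\textbf{Step 3: the minimizer is unique up to ties.} If $\phi$ also attains the minimum, both inequalities above are equalities. Then $\int(\phi^*-\phi)(p_t-\lambda^*)=0$ with a nonnegative integrand, so $\phi=\phi^*$ a.e.\ off the tie set $\{p_t=\lambda^*\}$. Hence $\phi$ is a global density-threshold rule. Here ``global'' means one $\lambda^*$ shared across all $x$. The common multiplier arises because the constraint is marginal over $X$, and this marks the contrast with per-$x$ $(1-\alpha)$ HDRs.

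\textbf{Step 4: the consequences.} The WT set \eqref{eq:WT-HDR} thresholds $\pt$ at the single level $e^{-\bq_{\mathrm{WT}}}$. Its population version, with the threshold chosen so that target coverage is $1-\alpha$ via \eqref{eq:obsimportance} and \Cref{prop:hierarchy}, is therefore $C^*$. The WS set instead thresholds $\ps$, and $\ps=\pt Z(x)/w(\wt y)$ by \eqref{eq:predictive-ratio}. Its level sets coincide with those of $\pt$ only when $w(\wt y)/Z(x)$ is constant on the relevant sets. For example, the atom comparison in \eqref{eq:atominclusion} changes whenever $w(L)\ne w(U)$ or $w$ is not constant on the interior. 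Generically WS therefore differs from $C^*$ on a set of positive coverage mass, and by Step 3 it is strictly suboptimal.

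\textbf{Main obstacle.} I expect the main obstacle to be measure-theoretic care rather than any calculation. One must make the mixed-space integrals well defined and check that $p_t$ is jointly measurable in $(x,\wt y)$. One must also handle the tie and atom-discreteness issue: atoms make coverage jump as $\lambda$ crosses $\piLt(x)$ or $\piUt(x)$, so exact coverage $1-\alpha$ may need randomization. If only deterministic rules are allowed, the claim holds in the sense that the threshold rule is optimal among rules of equal coverage.
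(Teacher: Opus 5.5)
Your proof is correct and uses the same Lagrangian/Neyman--Pearson argument as the paper. The paper minimizes the per-point Lagrangian $1-\eta\,p_t(\wt y\mid x)$ with $\lambda^*=1/\eta$, and your exchange inequality $(\phi^*-\phi)(p_t-\lambda^*)\ge 0$ is that pointwise minimization made rigorous; your explicit choice of $\lambda^*$, tie randomization, and Step 3 uniqueness argument supply details the paper's sketch leaves implicit. One small slip: $\lambda\mapsto\mathrm{Cov}(\ind\{p_t\ge\lambda\})$ is left-continuous, not right-continuous, and it is left-continuity that guarantees the largest feasible $\lambda$ is attained.
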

\noindent Proof in \Cref{app:proof-NP}.  The message is orthogonality: weights
correct the calibration aggregation, while the tilted score corrects the
geometry of the returned set.  Under censoring, this geometry controls both the
interior interval and the binary atom-inclusion decisions.  The optimality
statement is with respect to the mixed size convention \(\mu\) in
\eqref{eq:mixedmeasure}, which charges one unit for each included atom and
Lebesgue length on the interior.  It is therefore not invariant to rescaling the
response variable; it is an HDR optimality statement for the adopted mixed
measure.

\section{Experiments}\label{sec:exp}

The theoretical development above separates validity from efficiency.  The experiments therefore use a controlled simulation suite to test candidate-weighted validity on the observed mixed space, the approximation introduced by scalar atom weights and calibration-only thresholds, the efficiency gain from the tilted mixed score, and the behavior of density-ratio estimators.

\noindent\textbf{Common setup.} We use a latent Gaussian regression model
\[
  X\sim\cN(0,I_d),\qquad
  Y^*=\theta^\top X+\varepsilon,\qquad
  \varepsilon\sim\cN(0,\sigeps^2),
\]
with $d=5$, $\|\theta\|_2=1$, and $\sigeps=1$.  The target distribution is generated by latent exponential tilting,
\[
  \frac{dP_t}{dP_s}(x,y^*)
  =\exp\{\beta y^* - \tfrac12\beta^2\Var_s(Y^*)\},
\]
which implies $X_t\sim\cN(\beta\theta,I_d)$ and $Y_t^*\mid X_t=x\sim\cN(\theta^\top x+\beta\sigeps^2,\sigeps^2)$.  The observed response is $\wt Y=T(Y^*)$ with two-sided censoring at the source marginal censoring quantiles.  Unless otherwise stated, the censoring level is $10\%$ per side, $n_{\rm cal}=600$, $n_{\rm test}=1000$, and results are averaged over $200$ independent repetitions.  We use the oracle Gaussian predictive model in these diagnostics, so the experiments isolate the conformal layer and density-ratio approximations rather than posterior fitting error.  Mean mixed-space size is measured with respect to the mixed measure $\mu$ of \eqref{eq:mixed-measure-def}: it is the interior Lebesgue length plus one unit for each included atom.

The compared rules are as follows.  Source CP uses the source mixed score and an unweighted quantile.  UT uses the target tilted mixed score and an unweighted quantile.  WS uses the source mixed score with the exact observed-space candidate weight.  WT exact uses the target tilted mixed score with the exact observed-space candidate weight.  WT scalar uses the practical scalar-atom plug-in and calibration-only threshold.  The exact candidate-weighted rules use the true observed joint ratio, including the input-specific atom ratios induced by latent pushforward; set sizes for these rules are approximated by a fine grid over $(L,U)$.

\subsection{Experiment 1: method hierarchy}\label{sec:exp1}

\Cref{tab:exp1-hierarchy} gives the basic method hierarchy.  Under no shift, all methods are near the nominal level.  Under strong positive or negative shift, the unweighted source score undercovers mildly, while UT severely overcovers because its target-aligned score is not paired with target/source measure correction.  WS restores marginal coverage but remains wide.  WT exact restores marginal coverage with much smaller mixed-space sets; for $\beta=1.2$, the mean size drops from $2.894$ for WS to $1.740$ for WT exact, a reduction of about $40\%$.  WT scalar nearly matches the marginal coverage and size of WT exact in this controlled setting.

The component-wise columns should not be read as conditional guarantees.  They show the opposite: pooled mixed-space calibration reallocates coverage toward the censoring atom favored by the target shift.  For $\beta=1.2$, WT exact has right-atom coverage $0.990$ but interior coverage $0.730$ and left-atom coverage $0.064$; for $\beta=-1.2$, the pattern reverses.  This is why the main theory is stated as a marginal mixed-space coverage result, not a component-wise result.

\begin{table}[t]
\centering
\caption{Experiment 1: method hierarchy under latent label shift and two-sided censored observation. Values are averages over 200 repetitions.}
\label{tab:exp1-hierarchy}
\small
\setlength{\tabcolsep}{4pt}
\begin{tabular}{llccccc}
\toprule
$\beta$ & Method & Coverage & Size & Left cov. & Interior cov. & Right cov. \\
\midrule
-1.2 & Source CP & 0.880 & 2.761 & 0.866 & 0.910 & 0.763 \\
-1.2 & UT & 0.992 & 2.826 & 1.000 & 0.979 & 0.592 \\
-1.2 & WS exact & 0.905 & 2.904 & 0.894 & 0.928 & 0.782 \\
-1.2 & WT exact & 0.902 & 1.737 & 0.990 & 0.731 & 0.093 \\
-1.2 & WT scalar & 0.901 & 1.751 & 0.998 & 0.713 & 0.128 \\
\midrule
0.0 & Source CP & 0.901 & 3.045 & 0.779 & 0.930 & 0.787 \\
0.0 & UT & 0.901 & 3.045 & 0.779 & 0.930 & 0.787 \\
0.0 & WS exact & 0.901 & 3.040 & 0.779 & 0.930 & 0.787 \\
0.0 & WT exact & 0.901 & 3.040 & 0.779 & 0.930 & 0.787 \\
0.0 & WT scalar & 0.899 & 3.031 & 0.776 & 0.928 & 0.784 \\
\midrule
1.2 & Source CP & 0.880 & 2.773 & 0.740 & 0.909 & 0.866 \\
1.2 & UT & 0.992 & 2.837 & 0.572 & 0.978 & 1.000 \\
1.2 & WS exact & 0.902 & 2.894 & 0.754 & 0.926 & 0.890 \\
1.2 & WT exact & 0.901 & 1.740 & 0.064 & 0.730 & 0.990 \\
1.2 & WT scalar & 0.901 & 1.755 & 0.108 & 0.714 & 0.998 \\
\bottomrule
\end{tabular}
\end{table}

\subsection{Experiment 2: approximation gap}\label{sec:exp2}

\Cref{tab:exp2-approx} isolates the practical approximations used by WT.  The exact candidate rule uses the exact observed joint ratio.  The calibration-only rule keeps the exact joint ratio on calibration points but omits the candidate weight from the quantile.  The scalar atom rule replaces input-specific atom ratios by marginal atom ratios.  The scalar score+ratio rule also uses the scalar mixed predictive score induced by this approximation.

The main finding is that the approximation gap is small for marginal coverage in this controlled setting.  Even at $20\%$ censoring per side and $\beta=1.2$, all four variants remain between $0.901$ and $0.905$ marginal coverage.  The difference appears more clearly in component behavior: as censoring becomes stronger, interior coverage drops and the high-shift atom becomes nearly always covered.  Thus scalar atom weights and calibration-only thresholds are empirically useful approximations, but they should not be interpreted as component-wise validity.

\begin{table}[t]
\centering
\caption{Experiment 2: approximation gap for positive shift ($\beta=1.2$).}
\label{tab:exp2-approx}
\small
\setlength{\tabcolsep}{4pt}
\begin{tabular}{llcccc}
\toprule
Censoring & Method & Coverage & Size & Interior cov. & High-atom cov. \\
\midrule
0.05 & WT exact & 0.900 & 2.132 & 0.815 & 0.979 \\
0.05 & calib-only exact ratio & 0.897 & 2.120 & 0.812 & 0.976 \\
0.05 & scalar atom ratio & 0.903 & 2.160 & 0.822 & 0.978 \\
0.05 & scalar score+ratio & 0.901 & 2.159 & 0.803 & 0.991 \\
\midrule
0.10 & WT exact & 0.902 & 1.728 & 0.731 & 0.990 \\
0.10 & calib-only exact ratio & 0.900 & 1.724 & 0.728 & 0.989 \\
0.10 & scalar atom ratio & 0.905 & 1.752 & 0.740 & 0.990 \\
0.10 & scalar score+ratio & 0.902 & 1.750 & 0.716 & 0.998 \\
\midrule
0.20 & WT exact & 0.902 & 1.283 & 0.530 & 0.995 \\
0.20 & calib-only exact ratio & 0.901 & 1.280 & 0.527 & 0.995 \\
0.20 & scalar atom ratio & 0.905 & 1.298 & 0.547 & 0.996 \\
0.20 & scalar score+ratio & 0.901 & 1.288 & 0.506 & 1.000 \\
\bottomrule
\end{tabular}
\end{table}

\subsection{Experiment 3: mixed-HDR geometry}\label{sec:exp3}

\Cref{tab:exp3-geometry} is the most censoring-specific diagnostic.  It records the structural form of the WT prediction set for positive shift.  With weak censoring, an ordinary interior interval $I$ occurs often.  As censoring becomes severe, atom-containing sets dominate.  At $20\%$ censoring per side, WT exact produces the atom-only upper set $\{U\}$ in $57.3\%$ of evaluated test inputs, and the form $I+U$ in $40.1\%$.  WT scalar exhibits the same qualitative behavior.  No disconnected set appears in any cell, consistent with its non-genericity for a unimodal tilted-Gaussian interior (\Cref{rmk:disconnected}); the $\{U\}$ column is the empty-interior atom-only regime of \Cref{prop:structure}, which the corrected structural characterization now names explicitly.

This experiment demonstrates why the censored problem should not be presented as a mechanical Gaussian interval extension.  The output is a mixed-HDR set on $\{L\}\cup(L,U)\cup\{U\}$, and its geometry changes with the censoring regime and target shift.

\begin{table}[t]
\centering
\caption{Experiment 3: mixed-HDR geometry for positive shift ($\beta=1.2$). Entries are frequencies of set forms.}
\label{tab:exp3-geometry}
\small
\setlength{\tabcolsep}{5pt}
\begin{tabular}{lccccc}
\toprule
Censoring & Method & $I$ & $L+I$ & $I+U$ & $U$ \\
\midrule
0.01 & WT exact & 0.358 & 0.000 & 0.634 & 0.009 \\
0.01 & WT scalar & 0.355 & 0.000 & 0.644 & 0.001 \\
\midrule
0.05 & WT exact & 0.134 & 0.000 & 0.764 & 0.102 \\
0.05 & WT scalar & 0.090 & 0.000 & 0.864 & 0.045 \\
\midrule
0.10 & WT exact & 0.064 & 0.000 & 0.686 & 0.249 \\
0.10 & WT scalar & 0.027 & 0.001 & 0.808 & 0.163 \\
\midrule
0.20 & WT exact & 0.024 & 0.002 & 0.401 & 0.573 \\
0.20 & WT scalar & 0.001 & 0.003 & 0.467 & 0.529 \\
\bottomrule
\end{tabular}
\end{table}

\subsection{Experiment 4: density-ratio estimation}\label{sec:exp4}

The preceding experiments used known ratios.  \Cref{tab:exp4-beta-est} tests practical estimation of the interior tilt parameter $\beta$.  Two estimators that ignore the residual shift $\beta b$---point pseudo-labels and source-predictive sampling---recover only $\beta\,a/(a+b)$ of the true slope, here exactly half since $a=\Var_s(\bhmu)=\|\theta\|^2=1$ and $b=\bar\sigma_s^2=\sigeps^2=1$; the attenuated slope ($\widehat\beta\approx\pm0.6$) produces conservative, oversized sets (coverage $\approx0.955$, size $\approx2.35$).  The moment-matching initializer and tilted predictive sampling both recover the full slope and nearly match the known-$\beta$ rule.

Two caveats temper this table.  First, it separates the attenuated estimators (point, source-predictive) from the consistent ones (moment matching, tilted predictive), but it does \emph{not} separate moment matching from tilted predictive sampling.  In this linear-Gaussian oracle, $\bar\mu_t-\bar\mu_s=\beta\|\theta\|^2$ and $\widehat v_\mu^2=\|\theta\|^2$, so moment matching returns $\beta$ exactly and the residual correction that tilted sampling supplies is not needed.  The theoretical advantage of tilted predictive sampling---restoring the residual shift $\beta b$ that mean-based matching omits (\Cref{prop:lrbias})---becomes visible only when the predictable component $\bhmu(X)$ underestimates the explained variance, i.e.\ when the backbone is imperfect or the conditional mean is nonlinear, a regime this oracle diagnostic does not exercise.  We therefore report moment matching and tilted predictive sampling as comparable here, retain tilted predictive sampling as the principled default on the theoretical grounds of \Cref{sec:dre}, and defer a discriminating comparison to a fitted-head study (\Cref{sec:conclusion}).  Second, the atom ratios are taken as known censoring fractions throughout; estimating them jointly with $\beta$ under input-only target data is left to future work.

\begin{table}[t]
\centering
\caption{Experiment 4: density-ratio parameter estimation.}
\label{tab:exp4-beta-est}
\small
\setlength{\tabcolsep}{4pt}
\begin{tabular}{llcccc}
\toprule
$\beta$ & Estimator & $\widehat\beta$ & Abs. error & Coverage & Size \\
\midrule
-1.2 & known & -1.200 & 0.000 & 0.901 & 1.756 \\
-1.2 & moment matching & -1.199 & 0.025 & 0.899 & 1.740 \\
-1.2 & point pseudo-label & -0.599 & 0.601 & 0.954 & 2.344 \\
-1.2 & source predictive & -0.600 & 0.600 & 0.954 & 2.345 \\
-1.2 & tilted predictive & -1.189 & 0.031 & 0.903 & 1.764 \\
\midrule
1.2 & known & 1.200 & 0.000 & 0.901 & 1.745 \\
1.2 & moment matching & 1.199 & 0.024 & 0.900 & 1.747 \\
1.2 & point pseudo-label & 0.599 & 0.601 & 0.955 & 2.345 \\
1.2 & source predictive & 0.600 & 0.600 & 0.955 & 2.356 \\
1.2 & tilted predictive & 1.190 & 0.028 & 0.901 & 1.752 \\
\bottomrule
\end{tabular}
\end{table}

\noindent\textbf{Summary of empirical findings.} The experiments support four claims.  First, the exact candidate-weighted mixed-space rule delivers the intended marginal coverage under latent label shift.  Second, target-score tilting is essential for efficiency.  WT is much shorter than WS when both have valid weighting.  Third, scalar atom weights and calibration-only thresholds can be accurate for marginal coverage, but component-wise coverage may still be uneven because a single pooled threshold is used.  Fourth, the censored setting produces genuinely mixed prediction sets, often including boundary atoms or even atom-only forms under severe censoring.  A real-data study with censored molecular endpoints remains an important next step, but the synthetic section now matches the revised theoretical claims.

\section{Conclusion}\label{sec:conclusion}

This paper developed conformal Bayes for two-sided censored Gaussian regression under label shift. 
The censoring-specific contributions are the mixed posterior predictive distribution with atoms at the detection limits, 
the three-term closed-form normalizer $Z(x)$, the mixed-space score that unifies the two atom components 
and the interior density, and the closed-form mixed-HDR characterization, which ranges from an interior interval 
to a boundary atom alone.  Validity and efficiency are cleanly separated.  
Exact oracle validity is model-free but requires a valid mixed-space importance identity 
and the candidate-weighted rank, whereas efficiency is a population mixed-HDR property of the target score.  
The orthogonality of measure weighting and score tilting carries over to the mixed space.  
The two genuinely censoring-induced subtleties are the latent-to-mixed-space shift \Cref{rmk:latent_observed_shift} 
and the marginal atom-weight approximation \Cref{rmk:marginalapprox}, both of which are stated explicitly 
rather than hidden in implementation details.

Several limitations remain.  The method relies on a Laplace-approximated censored posterior, 
whose approximation error may grow near the censoring limits.  The two-sided Tobit head assumes Gaussian latent errors, 
and richer likelihoods such as skew-$t$ or Weibull models would require numerical estimation of $Z(x)$.  
The experiments are controlled synthetic diagnostics using an oracle Gaussian predictive model, 
so a fitted-head study and real censored molecular data remain important next steps.  
If non-Gaussian or mixture predictive heads are used, disconnected mixed sets may arise 
and will require a clear communication protocol for domain scientists.

Future work includes full censored conformal Bayes with leave-one-out calibration, 
parameter-posterior correction through reweighted fine-tuning of the two-sided Tobit head, 
and non-Gaussian censored likelihoods, where the tilting identity and method hierarchy persist but $Z(x)$ is estimated numerically.

\bibliographystyle{abbrvnat}
\bibliography{sjc}

\clearpage
\appendix

\section{Gradient, Mills-ratio derivative, and Hessian}\label{app:hessian}

This appendix gives the calculus behind the Laplace approximation used for the two-sided Tobit head in \Cref{sec:model}.  We derive the score and negative Hessian of the observed-data log-likelihood, show how the Mills-ratio derivative enters the censored terms, and justify the concavity statement used to define the Gaussian approximation around the MAP.

With $\eta_i^L=(L-\theta^\top\backbone(x_i))/\sigeps$, $\eta_i^U=(U-\theta^\top\backbone(x_i))/\sigeps$ and $\lambda(t)=\phist(t)/\Phi(t)$, the chain rule gives $\nabla_\theta\eta_i^L=-\backbone(x_i)/\sigeps$ and $\nabla_\theta(-\eta_i^U)=\backbone(x_i)/\sigeps$, so
\[
\nabla_\theta\ell=-\frac1\sigeps\!\sum_{\wt{y}_i=L}\!\lambda(\eta_i^L)\backbone(x_i)+\frac1{\sigeps^2}\!\sum_{\wt{y}_i\in(L,U)}\!(\wt{y}_i-\theta^\top\backbone(x_i))\backbone(x_i)+\frac1\sigeps\!\sum_{\wt{y}_i=U}\!\lambda(-\eta_i^U)\backbone(x_i).
\]
Since $\phist'(t)=-t\phist(t)$ and $\Phi'(t)=\phist(t)$, the quotient rule gives $\lambda'(t)=-t\lambda(t)-\lambda(t)^2=-\lambda(t)(\lambda(t)+t)<0$ (because $\lambda$ is strictly decreasing). Differentiating each term yields the negative Hessian \eqref{eq:hessian}; each censored block contributes $-\lambda'(\eta)\backbone\backbone^\top/\sigeps^2\succeq0$ and each interior block $\backbone\backbone^\top/\sigeps^2\succeq0$. If the design $\{\backbone(x_i)\}$ has full column rank these sum to a positive-definite matrix, so $\ell$ is strictly concave; in any case the Gaussian prior adds $\tau^{-2}I_d\succ0$, making the penalized MAP objective strictly concave with a unique global maximizer.

\section{Latent-Pushforward Tilting Details}\label{app:latent-pushforward}

This appendix records the conditional latent-pushforward calculation that is only summarized in \Cref{sec:model}.  The main text uses the observed-space ratio $w(\wt y)=dP^t_{\wt Y}/dP^s_{\wt Y}(\wt y)$ to define the practical tilted predictive distribution.  Here we instead start from a latent ratio $w^*(y^*)$ and then censor the resulting tilted latent predictive distribution.  This derivation explains why exact atom weights are tail averages and why the scalar atom weights used in the main algorithm are an observed-space approximation.

The latent tilted predictive satisfies
\[
 p_t^{\rm lat}(y^*\mid x,\dtrain)=\frac{p_s(y^*\mid x,\dtrain)w^*(y^*)}{\int p_s(u\mid x,\dtrain)w^*(u)\,du}.
\]
Pushing this law forward through $T$ gives the left atom
\[
 p_t^{\rm lat}(\wt Y=L\mid x,\dtrain)
 =\frac{\int_{-\infty}^L p_s(y^*\mid x,\dtrain)w^*(y^*)\,dy^*}{\int p_s(u\mid x,\dtrain)w^*(u)\,du}
 =\frac{\pi_L(x)w_x(L)}{Z_x^{\rm lat}},
\]
where
\[
Z_x^{\rm lat}
=
\pi_L(x)w_x(L)
+
\int_L^U f_s(y\mid x)w^*(y)\,dy
+
\pi_U(x)w_x(U).
\]
The right atom is analogous.  On the interior $T$ is one-to-one, so the density is
\[
 \frac{f_s(\wt y\mid x)w^*(\wt y)}{Z_x^{\rm lat}},\qquad L<\wt y<U.
\]
This derivation gives the exact conditional latent pushforward construction.  The practical method in the main text replaces $w_x(L),w_x(U)$ by the scalar marginal ratios in \eqref{eq:scalaratoms} and uses the observed-space tilted predictive distribution \eqref{eq:predictive-tilt-general}.  This replacement is exact only under the observed-space identity \eqref{eq:obsimportance} or an equivalent condition making the tail-averaged ratio independent of $x$.

\section{Proof of \Cref{prop:structure} (Structural Characterization)}\label{app:proof-structure}

This appendix proves that the WT mixed-HDR set is the union of two independently thresholded atoms and one interior super-level interval.  This is the precise structural statement that replaces the earlier fixed enumeration of possible forms.

\begin{proof}
Atom membership is a direct threshold comparison on the $\mu$-density: $L\in\cC^{\mathrm{WT}}(x)\iff\piLt(x)\ge\lambda^*$ and $U\in\cC^{\mathrm{WT}}(x)\iff\piUt(x)\ge\lambda^*$, which define $A_L,A_U$. On $(L,U)$ the tilted density $\ft\propto\phist((\wt{y}-\mut)/\bhsigma)$ is a Gaussian restricted to the interval and is therefore unimodal, so its super-level set $I(x)=\{\wt{y}\in(L,U):\ft\ge\lambda^*\}$ is convex, i.e.\ empty or a single interval. When nonempty, the defining inequality is $(\wt{y}-\mut)^2/(2\bhsigma^2)\le\beta\bhmu+\tfrac12\beta^2\bhsigma^2-\log(Z(x)Z_w\bhsigma\sqrt{2\pi})-\log\lambda^*$, which yields $r(x)$ and $[y_\ell,y_u]$ as in \eqref{eq:interval}; it is empty exactly when the right-hand side is negative, i.e.\ when $\lambda^*$ exceeds the interior maximum. Since $A_L$, $A_U$, and $I$ are determined by three independent thresholdings, $\cC^{\mathrm{WT}}=A_L\cup I\cup A_U$ ranges over precisely the unions listed in (a)--(b). Within (a), the union is connected iff every included atom adjoins the interval (an included $L$ requires $y_\ell=L$, equivalently $\ft(L^+)\ge\lambda^*$; symmetrically at $U$), and disconnected otherwise. This accounts for every combination, including atom-only sets and the empty set when $I=\emptyset$. No claim of a fixed number of forms is made: the partition is by the three independent decisions, not by an enumerated list.
\end{proof}

\section{Proofs of \Cref{lem:varcomp} and \Cref{prop:lrbias}}\label{app:proof-lrbias}

This appendix collects the short calculations behind the pseudo-label bias discussion in \Cref{sec:dre}.  The first proof explains why point pseudo-labels remove within-input predictive variance.  The second proof derives the attenuation factor for source-predictive pseudo-label sampling.

\begin{proof}[\Cref{lem:varcomp}]
By the law of total variance, $\Var_s(Y^*)=\Var_{\ps(x)}\{\E[Y^*\mid X]\}+\E_{\ps(x)}\{\Var(Y^*\mid X)\}$; substituting $\E[Y^*\mid x]\approx\bhmu(x)$, $\Var(Y^*\mid x)\approx\bhsigma^2(x)$ gives $\Var_{\ps(x)}[\bhmu(X)]\approx v_s^2-\bar\sigma_s^2$.
\end{proof}
\begin{proof}[\Cref{prop:lrbias}]
Tilting a Gaussian component shifts its mean by $\beta\times$ its variance. With $M\perp\varepsilon$, tilting by $e^{\beta(M+\varepsilon)}$ shifts $M$ by $\beta a$ and $\varepsilon$ by $\beta b$. Point pseudo-labels keep $M_t\sim\cN(\beta a,a)$; source-PS adds a mean-zero variance-$b$ residual, giving $\cN(\beta a,a+b)$. Comparing with the source label marginal $\cN(0,a+b)$ yields slope $\beta a/(a+b)$.
\end{proof}

\section{Proof of \Cref{prop:hierarchy} (Correction Hierarchy)}\label{app:proof-hierarchy}

This appendix proves the validity hierarchy for UT, WS, and WT.  The proof separates the calibration effect of the importance weights from the geometric effect of the chosen score.

\begin{proof}
\emph{(i)} UT thresholds the source quantile $\bq_{\mathrm{UT}}$ of $\mathcal L_{\ps}(s^{\mathrm{WT}})$ but is evaluated under $\mathcal L_{\pt}(s^{\mathrm{WT}})$, so the coverage error is $|F_t(\bq_{\mathrm{UT}})-F_s(\bq_{\mathrm{UT}})|\le\dTV(\mathcal L_{\pt}(s^{\mathrm{WT}}),\mathcal L_{\ps}(s^{\mathrm{WT}}))+O(n^{-1/2})$; direction depends on how shift changes atom masses and interior density.
\emph{(ii)} By the observed-space identity \eqref{eq:obsimportance}, $\E_{\ps}\{w(\wt{Y})f(X,\wt{Y})\}=\E_{\pt}\{f(X,\wt{Y})\}$ for all measurable $f$---the only ingredient for weighted conformal validity. Applying the weighted rank argument to the $n{+}1$ augmented sequence (with the test weight $w(\wt{y}_{\mathrm{test}})$) gives $\Prob_{\pt}(\wt{Y}_{\mathrm{test}}\in\cC^r_{\mathrm{exact}})\ge1-\alpha$ for any score, including $s^{\mathrm{WS}},s^{\mathrm{WT}}$; no Tobit/Laplace property is used.
\emph{(iii)} Dropping the single candidate term from the weighted CDF changes it by at most $w(\wt{y})/(\sum_i w(\wt{y}_i)+w(\wt{y}))$; the supremum over candidates is $\epsilon_{\mathrm{cand}}=O_{\Prob}(1/n)$ under bounded, nondegenerate weights.
\end{proof}

\section{Proof of \Cref{prop:NP} (Mixed-HDR Optimality)}\label{app:proof-NP}

This appendix gives the population optimality argument for mixed-HDR sets under the adopted mixed measure.  The result is conditional on this size convention, which charges one unit for each atom and Lebesgue length for the interior.

\begin{proof}
Minimize $\E_X[\mu(C(X))]$ subject to $\E_X\!\int_{C(X)}p_t(\wt{y}\mid X)\,d\mu\ge1-\alpha$. With multiplier $\eta\ge0$, the per-point Lagrangian contribution is $1-\eta\,p_t(\wt{y}\mid x)$, so a point enters iff $p_t(\wt{y}\mid x)\ge1/\eta=:\lambda^*$. The argument is measure-theoretic and unchanged on a mixed space because atoms and interior share the dominating measure $\mu$. For WT, $\{s^{\mathrm{WT}}\le\bq_{\mathrm{WT}}\}$ is the level set with $\lambda^*=e^{-\bq_{\mathrm{WT}}}$; WS thresholds $p_s$ and targets a source HDR, generally suboptimal for $\E[\mu(C)]$ under $p_t$ when $p_s\neq p_t$.
\end{proof}

\section{SCB vs.\ SCB-C}\label{app:comparison}

This appendix summarizes how the censored method differs from the uncensored split conformal Bayes construction.  The main differences are the mixed observed outcome space, the three-component label ratio, and the atom-plus-interior geometry of the prediction set.

\begin{table}[h]
\centering
\caption{Uncensored SCB \citep{Choi2026eiml} vs.\ two-sided censored SCB-C.}\label{tab:comparison}
\smallskip
\begin{tabular}{@{}lll@{}}
\toprule
Dimension & SCB (uncensored) & SCB-C (censored)\\
\midrule
Outcome space & $\R$ & $\{L\}\cup(L,U)\cup\{U\}$ (mixed)\\
Predictive & $\cN(\bhmu,\bhsigma^2)$ & atoms $+$ Gaussian interior\\
Tilted predictive & $\cN(\bhmu+\beta\bhsigma^2,\bhsigma^2)$ & reweighted atoms $+$ shifted trunc.\ Gaussian\\
$Z(x)$ & $e^{\beta\bhmu+\frac12\beta^2\bhsigma^2}/Z_w$ (1 term) & 3 terms (left atom, partial MGF, right atom)\\
Prediction set & symmetric interval & mixed HDR: atoms plus interior interval\\
Coverage focus & marginal coverage & marginal coverage on mixed observed space\\
Density ratio & 1 function $w(y)$ & 3 objects $w(L),w(\wt{y}),w(U)$\\
\bottomrule
\end{tabular}
\end{table}

\section{Commutativity of censoring and posterior averaging}\label{app:commute}

The atom mass can be computed by averaging the latent likelihood over the posterior and then censoring (``averaging first,'' used in the main text), or by censoring the likelihood and then averaging (``censoring first''); the two commute. Under the Laplace approximation $\pi(\theta\mid\dtrain)\approx\cN(\bhtheta,\bhSigma_\theta)$, averaging first gives the Gaussian latent predictive $\cN(\bhmu(x),\bhsigma^2(x))$ and $\Prob(\wt{Y}=L\mid x)=\Phi((L-\bhmu(x))/\bhsigma(x))$; censoring first gives $\E_\theta[\Phi((L-\theta^\top\backbone(x))/\sigeps)]$, which by the Gaussian-CDF convolution identity equals the same $\Phi((L-\bhmu(x))/\bhsigma(x))$. Under $S$ MCMC draws the equivalence persists: both yield $\frac1S\sum_s\Phi((L-\theta^{(s)\top}\backbone(x))/\sigeps)$, because the CDF of a Gaussian mixture is the mixture of component CDFs. The practical pitfall is summarizing a non-Gaussian MCMC posterior by its mean $\bar\theta$ and plugging into a single $\Phi$: passing $\E[\theta]$ through the nonlinear $\Phi$ violates Jensen's inequality and discards posterior skew/heavy tails. With MCMC, the ``censoring first'' (per-sample) average is therefore preferred; under the Gaussian Laplace posterior used here, ``averaging first'' is exact.

\end{document}